\newtheorem{theorem}{Theorem}
\newtheorem{corollary}[theorem]{Corollary}
\newtheorem{lemma}[theorem]{Lemma}
\newtheorem{definition}[theorem]{Definition}
\newtheorem{proposition}[theorem]{Proposition}
\numberwithin{equation}{section}
\numberwithin{theorem}{section}
\newcommand{\rr}{{\mathbb{R}}}
\newcommand{\zz}{{\mathbb{Z}}}
\newcommand{\nn}{{\mathbb{N}}}
\newcommand{\cp}{{\mathbb{C}_+}}
\newcommand{\ee}{{\mathbb{E}\,}}
\newcommand{\eesub}[1]{{\mathbb{E}_{{#1}}}}
\newcommand{\conee}[2]{{\mathbb{E}\left[ \left. #1 \, \right\vert \,  #2 \right]}}
\newcommand{\pp}{{\mathbb{P}}}
\newcommand{\oh}{{\mathcal{O}}}
\newcommand{\im}{{\operatorname{Im}\,}}
\newcommand{\re}{{\operatorname{Re }\,}}
\newcommand{\parder}[2]{{\frac{\partial #1}{\partial #2}}}
\newcommand{\tr}{{\operatorname{Tr}\,}}
\newcommand{\beq}[1]{\begin{equation} \label{#1}}
\newcommand{\eeq}{\end{equation}}
\newcommand{\gtil}{{\tilde{G}}}
\begin{document}
\addtokomafont{author}{\raggedright}
\title{\raggedright Delocalization and continuous spectrum for ultrametric random operators}
\author{\hspace{-.075in}Per von Soosten and Simone Warzel}
\date{\vspace{-.2in}}
\maketitle
\minisec{Abstract} This paper studies the delocalized regime of an ultrametric random operator whose independent entries have variances decaying in a suitable hierarchical metric on $\mathbb{N}$. When the decay-rate of the off-diagonal variances is sufficiently slow, we prove that the spectral measures are uniformly $\theta$-H\"{o}lder continuous for all $\theta \in (0,1)$. In finite volumes, we prove that the corresponding ultrametric random matrices have completely extended eigenfunctions and that the local eigenvalue statistics converge in the Wigner-Dyson-Mehta universality class.
\bigskip

\section{Introduction}\label{sec:introduction}
Establishing regimes of eigenfunction delocalization, let alone determining the precise location of a localization-delocalization transition, in random matrix models with non-trivial spatial structure remains a major challenge. At one extreme end of this class of problems is the famous Anderson model, whose delocalization regime is only understood for the special case of tree graphs and similar structures~\cite{MR3055759,MR2257129,MR2274470,MR1302384, MR2864550,MR3510466, bauerschmidt2,bauerschmidt1}. Random band matrices~\cite{PhysRevLett.64.1851,PhysRevLett.67.2405} and power-law random band matrices (PRBM)~\cite{PhysRevE.54.3221} are more amenable to analysis, but, with the exception of a very recent result on Gaussian block-band matrices~\cite{MR3824956}, most mathematically rigorous methods for band matrices~\cite{MR2726110, MR3665217, MR3695802, MR3085669, MR2525652, hemarcozzi, bandmatrix1, bandmatrix2, bandmatrix3, doi:10.1093/imrn/rnx145} break down far from the critical point. In order to understand mechanisms for the occurrence of delocalized phases, it is therefore reasonable to study even simpler models like the ultrametric ensemble of Fyodorov, Ossipov and Rodriguez \cite{1742-5468-2009-12-L12001}. This structured random matrix ensemble has a variance profile that decays in a suitable hierarchical metric, mimicking the variance profile of the PRBM, whose off-diagonal entries decay in the Euclidean metric.

Since the work of Dyson~\cite{MR0436850}, hierarchical models like the ultrametric ensemble have figured prominently by paving the way for the mathematical analysis of more complicated models. Historically, hierarchical approximations have consistently reproduced qualitative features when inserted into the central models of statistical physics while remaining significantly more amenable to rigorous analysis (see for example~\cite{MR1552611, MR1143413, MR649813} and references therein). This principle notably fails for the hierarchical Anderson model~\cite{MR1063180}, which does not have a delocalized regime~\cite{MR1463464,MR2352276, proceedings, MR3649447} because of the localizing properties of the hierarchical Laplacian. Nevertheless, due to the random off-diagonal entries, the ultrametric ensemble is significantly less rigid than the hierarchical Anderson model and is expected to retain the core features of the PRBM localization transition~\cite{1742-5468-2009-12-L12001, PhysRevE.98.042116}. 

The ultrametric operator $H: \ell^2(\nn) \to \ell^2(\nn)$ is defined by the sum
\beq{eq:defHinfty} H = \sum_{r=0}^\infty \sqrt{t_r} \Phi_{\infty,r} , \qquad t_r = 2^{-(1+\epsilon)r},\eeq
where $\epsilon > -1$ is a parameter governing the decay rate of the off-diagonal entries. The hierarchical layers $\Phi_{\infty ,r}: \ell^2(\nn) \to \ell^2(\nn)$ have Gaussian entries that are independent up to the symmetry constraint with variances given by
\beq{eq:phidef} \ee \left| \langle \delta_y, \Phi_{\infty,r} \delta_x \rangle \right|^2 = \frac{1}{2^r} \begin{cases} 2 & \mbox{ if } d(x,y) = 0\\ 1 & \mbox{ if } 1 \le d(x,y) \le r\\ 0 & \mbox{ otherwise. }\end{cases}\eeq
The metric $d$ in~\eqref{eq:phidef} is the ultrametric
\[d(x,y) = \min  \left\{ r \geq 0 \, | \, \mbox{$x$ and $y$ lie in a common member of $\mathcal{P}_r$} \right\},\]
where $\{\mathcal{P}_r\}$ is the nested sequence of partitions defined by
\[\nn = \{1, \dots, 2^r\} \cup \{2^r + 1, \dots, 2\cdot 2^r \} \cup \dots\]
In~\eqref{eq:phidef} and throughout this paper, $\delta_x \in \ell^2(\nn)$ denotes the standard site basis element defined by
$ \delta_x(u) =1 $ when $ u = x $ and $\delta_x(u) = 0 $ otherwise. The natural counterpart of  $H$ in the finite volume
\[B_n = \{1,  \dots N_n \}, \qquad N_n = 2^n\]
is given by the ultrametric ensemble
\[H_n = \sum_{r=0}^n \sqrt{t_r} \Phi_{n, r}\quad 
\mbox{with} \quad \Phi_{n,r} = 1_{B_n} \Phi_{\infty, r} 1_{B_n}. \]

It is easy to check that the entries of $H$ decay according to
\[ \ee  \left| \langle \delta_y, H \delta_x \rangle \right|^2 = \oh \left(2^{-(2 + \epsilon)d(x,y)} \right) . \]
Since the hierarchical metric $d(x,y)$ grows only logarithmically in the volume, whereas the Euclidean metric grows linearly in the volume, this decay rate is the correct hierarchical analogue of the PRBM decay rate $ \ee \left| \langle \delta_y, H \delta_x \rangle \right|^2 = \oh \left(|x-y|^{-(2 + \epsilon)} \right) $. It is therefore reasonable to suppose that, like the PRBM, the ultrametric ensemble transitions between delocalization and localization as $\epsilon$ increases across the critical point $\epsilon = 0$. This critical point has a natural connection to the mixing properties of Dyson Brownian motion~\cite{MR0148397} (see~\eqref{eq:dbm} below) and is also in line with the theoretical physics predictions of~\cite{1742-5468-2009-12-L12001} and the numerical work~\cite{PhysRevE.98.042116}. 

The main results of this paper rigorously establish a delocalized phase when $\epsilon \in (-1, -1/2)$.  The precise formulations of these results require the average density of states $ \rho_n  $ of $ H_n $ defined by
\[\ee \frac{1}{N_n} \tr f(H_n) = \int \! f(\lambda) \rho_n(\lambda) \, d\lambda \]
for any continuous function $ f $. The existence of $ \rho_n \in L^1 \cap L^\infty $ is guaranteed by the Wegner estimate~\cite{MR639135}. 
Throughout the paper, we will fix a bulk spectral window $ W $, which is characterized by the following requirement.
\begin{definition} \label{def:bulk}
A closed interval $ W \subset \rr $ is a \emph{bulk set} if there is some neighborhood $\tilde{W}$ of $W$ such that
\beq{eq:dosassumption}\liminf_{n \to \infty} \inf_{E \in \tilde{W}} \rho_n(E) > 0.\eeq
\end{definition}
An explicit proof of the existence of such an interval would require a proof of some additional regularity of the density of states beyond the Wegner estimate, which we leave as an open problem. 

In the infinite volume, a principal probe for delocalization concerns the continuity of the spectrum of the random operator $H $. We note that $H$ is almost surely essentially self-adjoint on the functions of finite support, since the successive removal of off-diagonal matrix elements in $ B_r $ completely disconnects the system (a detailed proof is contained in Section~\ref{sec:proof}). Our first main result shows that the spectral measures of $H$ are almost surely $\theta$-H\"{o}lder continuous in the bulk for any exponent $\theta \in (0,1)$. It adds the ultrametric operator $ H $ in the regime $\epsilon \in (-1, -1/2)$ to the few existing examples of infinite-volume random operators for which existence of continuous spectrum is established.

\begin{theorem}\label{thm:measures} Suppose $\epsilon \in (-1, -1/2)$, let $x \in \nn$, and let $\mu_x$ denote the spectral measure of $\delta_x$ for $H$. Then, for any bulk set $W$ and any $ \theta \in (0,1) $,  there almost surely exists $C_x < \infty$ such that
\[\mu_x(E - \eta, E + \eta) \le C_x \eta^{1-\theta}\]
for all $E \in W$ and $\eta > 0$.
\end{theorem}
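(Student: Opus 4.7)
The plan is to deduce the H\"older bound from a moment bound on the diagonal entry of the resolvent. Writing $g(z) := \langle \delta_x, (H - z)^{-1} \delta_x \rangle$ for the Herglotz transform of $\mu_x$, the classical inequality
\[
\mu_x(E - \eta, E + \eta) \le \pi \eta \cdot \im g(E + i\eta)
\]
reduces the theorem to the almost-sure bound $\sup_{E \in W,\, \eta \in (0, 1]} \eta^\theta \im g(E + i\eta) \le C_x$. Since $z \mapsto \im g(z)$ is positive and harmonic on $\cp$, the Harnack inequality lets me replace this supremum by the values of $\im g$ on a countable dyadic grid $\{(E_j, 2^{-k})\}$ with $E_j \in W$ equi-spaced at scale $2^{-k}$. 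A union bound combined with Chebyshev's inequality and Borel-Cantelli then yields the a.s.\ bound provided I have a uniform moment estimate
\[
\sup_{E \in W,\, \eta \in (0, 1]} \ee \left[ \left( \im g(E + i\eta) \right)^s \right] \le K_s < \infty
\]
for some $s > 1/\theta$; the constraint $s\theta > 1$ arises from the union-bound sum $\sum_k (|W|/\eta_k) \eta_k^{s\theta} \asymp \sum_k 2^{k(1 - s\theta)}$.

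I would verify this moment bound via finite-volume approximation. Setting $g_n(z) := \langle \delta_x, (H_n - z)^{-1} \delta_x \rangle$, the essential self-adjointness of $H$ on finitely supported vectors proved in Section~\ref{sec:proof} yields strong resolvent convergence $H_n \to H$, and hence $g_n(z) \to g(z)$ almost surely for every $z \in \cp$. Fatou's lemma then reduces the infinite-volume moment bound to the corresponding bound on $g_n$ for all sufficiently large $n$.

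The finite-volume moment bound should follow from the other principal results of the paper. Using the spectral decomposition
\[
\im g_n(E + i\eta) = \sum_k \frac{\eta |\psi_k(x)|^2}{(E - \lambda_k)^2 + \eta^2},
\]
complete delocalization of the eigenfunctions of $H_n$ gives $|\psi_k(x)|^2 = \mathcal{O}(N_n^{-1 + o(1)})$ with overwhelming probability, and the bulk density-of-states bound from Wigner-Dyson-Mehta universality yields $\#\{k : |\lambda_k - E| \le \eta\} = \mathcal{O}(N_n \eta)$ for $\eta \gtrsim N_n^{-1+\kappa}$ and any fixed $\kappa > 0$. Together these produce a deterministic bound $\im g_n \le C$ off a low-probability event, while the a priori bound $\im g_n \le 1/\eta$ absorbs the remainder in the moment integral. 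For an infinite-volume target scale $\eta$ I take $n$ large compared to $\log(1/\eta)$ so that the low-probability contribution becomes negligible.

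The main obstacle is securing finite-volume delocalization and universality estimates with probabilistic rates strong enough to support arbitrarily high moment orders $s$, since $\theta$ is allowed to approach $0$ and hence $1/\theta$ is unbounded. The bad event in the delocalization estimate must therefore have probability decaying faster than any fixed polynomial in $N_n$. The restriction $\epsilon \in (-1, -1/2)$ ultimately originates in the hierarchical / Dyson Brownian motion analysis behind those finite-volume results: it ensures that the GOE perturbation $\sqrt{t_n}\,\Phi_{n,n}$ added at the last hierarchical layer is large relative to the mean level spacing $N_n^{-1}$, supplying the quantitative margin needed for DBM to mix the spectrum into the universal regime. Pushing the method up to the conjectured critical point $\epsilon = 0$ would require a much finer analysis of the DBM flow at the edge of its mixing time.
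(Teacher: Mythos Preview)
Your reduction to controlling $\im g(E+i\eta)$ and the Harnack-plus-grid discretization are sound, but the moment estimate you aim for,
\[
\sup_{E\in W,\ \eta\in(0,1]}\ \ee\!\left[(\im g(E+i\eta))^s\right]\le K_s,
\]
does not follow from the argument you outline. The finite-volume resolvent control that is actually available (whether phrased as delocalization plus eigenvalue counting, or directly as Theorem~\ref{thm:locresbound}) is a stochastic-domination bound $\im G_n\prec 1$, which yields moment bounds only of the form $\ee[(\im G_n(z))^s]\le N_n^{s\theta'}$ for every $\theta'>0$, \emph{not} for $\theta'=0$. These diverge as $n\to\infty$, so Fatou applied to $g_n\to g$ cannot produce a finite uniform bound. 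If instead you pick $n=n(\eta)$ as in your last paragraph, strong resolvent convergence alone gives no rate and hence no way to compare $g$ with $g_{n(\eta)}$ at the matched scale; you would be back to needing exactly the quantitative input you have not supplied.

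The paper closes this gap by abandoning moments and working $\omega$-wise. It proves the quantitative comparison
\[
|G(z)-G_n(z)|\ \le\ C_\beta'(\omega)\,\eta^{-2}\,t_n^{(1-\beta)/2}
\]
by bounding the operator norm of the tail $\sum_{r>n}\sqrt{t_r}\,\Phi_r^{(1)}$ acting on $(H_n-z)^{-1}\delta_1$; the random constant $C_\beta'(\omega)$ is almost surely finite but is never claimed to have moments. This is combined with the Borel--Cantelli consequence of Theorem~\ref{thm:locresbound}, namely that $\sup_{z\in W+i(\eta_n,2)}\im G_n(z)\le N_n^\gamma$ almost surely for all large $n$. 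Choosing $n\asymp\log(1/\eta)$ explicitly so that both $\eta_n\le\eta$ and $\eta^{-2}t_n^{(1-\beta)/2}\le C$, one gets
\[
\mu_x(E-\eta,E+\eta)\ \le\ 2\eta\,\im G(E+i\eta)\ \le\ (2+C_\beta'(\omega))\,\eta\,N_{n(\eta)}^{\gamma}\ \le\ C_x(\omega)\,\eta^{1-\theta}
\]
once $\gamma$ is taken small enough. The H\"older constant $C_x$ in the theorem is therefore genuinely random, and no uniform moment bound on $\im g$ is ever asserted.

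A smaller point: deducing the bound on $\im G_n$ from eigenfunction delocalization plus eigenvalue counting runs the paper's logic in reverse. Theorem~\ref{thm:locresbound} is the primitive estimate here, and Theorem~\ref{thm:eigenfunctions} is its corollary, so invoking the resolvent bound directly is both cleaner and avoids an apparent circularity.
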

 
Theorem~\ref{thm:measures} has the usual dynamical consequence~\cite{MR1423040} that any observable $A$ on $ \ell^2(\nn) $ obeys the inequality
\[\int_0^T \! \left| \langle \delta_x, A_W(t) \delta_x \rangle \right| \, \frac{dt}{T} \le (CT^{-(1-\theta)})^{1/p} \|A\|_p,\]
where $A_W(t) = \left(1_W(H) e^{itH} \right) A  \left(1_W(H) e^{-itH} \right)$ and $\|A\|_p$ is the Schatten $p$-norm of $A$. In particular, choosing $A = 1_{B_R(x)}$ shows that energy-filtered dynamics satisfy
\[\int_0^T \! \sum_{y \in B_R(x)} \left| \langle \delta_y,  1_W(H) e^{-itH} \delta_x \rangle \right|^2 \, \frac{dt}{T} \le \left(CT^{-(1-\theta)} |B_{R}(x)|\right)^{1/p},\]
giving an explicit decay rate for the time-averaged quantum probability of measuring the position of a particle started at $x \in \nn$ in the hierarchical ball $B_R(x)$.

Our second main result is a refinement of Theorem~\ref{thm:measures} in finite-volumes, which shows that the normalized eigenfunctions of $ H_n $ in the bulk are maximally extended throughout the volume $B_n$. The formulation of this result uses the stochastic domination language of~\cite{MR3068390}, which we retain throughout this paper. Given two $N_r$-dependent random variables $X$ and $Y$, we say that $X$ is stochastically dominated by $Y$, written $X \prec Y$, if for every $\theta, p > 0$ we have
\begin{equation}\label{eq:notation}
\pp\left(X \geq N_r^\theta Y\right) \le N_r^{-p}
\end{equation}
for all sufficiently large $r$. 

\begin{theorem}\label{thm:eigenfunctions} If $\epsilon \in (-1, -1/2)$, then the $ \ell^2$-normalized eigenfunctions of $H_n$ with eigenvalues in any bulk set $ W $ satisfy
\[\sup_{E \in W \cap \sigma(H_n)} \|\psi_E\|_\infty  \prec N_n^{-1/2}.\]
\end{theorem}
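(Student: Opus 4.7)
The plan is to derive the pointwise eigenfunction bound from a high-probability estimate on the imaginary part of the diagonal Green function at a spectral scale slightly above the mean level spacing $N_n^{-1}$. If $H_n\psi_E = E\psi_E$ with $\|\psi_E\|_2 = 1$, then isolating the spectral-decomposition contribution at $E$ gives the standard identity
\[
|\psi_E(x)|^2 \le \eta \, \im \langle \delta_x, (H_n - E - i\eta)^{-1} \delta_x \rangle
\]
for every $\eta > 0$. Setting $\eta = N_n^{-1+\theta}$ with $\theta > 0$ arbitrarily small reduces the theorem to establishing the high-probability uniform resolvent bound
\[
\sup_{E \in \tilde{W}} \, \max_{x \in B_n} \, \im \langle \delta_x, G_n(E + i\eta)\delta_x\rangle \prec 1,
\]
where $G_n(z) = (H_n - z)^{-1}$; call this target estimate $(\ast)$. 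Note that $(\ast)$ is essentially a finite-volume, quantitative Green-function rephrasing of the H\"older continuity proved in Theorem~\ref{thm:measures}, via $\im G_n(E+i\eta)_{xx} \approx \mu_x^{(n)}([E-\eta, E+\eta])/\eta$.

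To establish $(\ast)$ at a fixed $(E, x)$, I would exploit the hierarchical decomposition $H_n = (H_{n-1}' \oplus H_{n-1}'') + \sqrt{t_n}\, \Phi_{n,n}$, in which the outermost summand is, up to a deterministic diagonal shift, a GOE-like matrix of variance $t_n = N_n^{-(1+\epsilon)}$ while the inner block is a direct sum of two independent copies of the ultrametric matrix at scale $n-1$. Conditioning on the inner block and applying the Schur complement formula expresses $\im G_n(E+i\eta)_{xx}$ as a quadratic form in the eigenvectors of that block; because the ratio $t_n/\eta = N_n^{-\epsilon - \theta}$ diverges for $\epsilon < 0$, the added Gaussian randomness should smooth the resolvent from scale $n-1$ down to the target scale $\eta$. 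Iterating this smoothing through the $\log_2 N_n$ levels, and using the bulk lower bound on $\rho_n$ to prevent the imaginary part from degenerating at the base of the recursion, should yield $(\ast)$ at a single $(E,x)$ with probability of failure faster than any inverse polynomial in $N_n$.

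The passage from a fixed $(E, x)$ to the supremum in $(\ast)$ is routine: $E \mapsto \im G_n(E+i\eta)_{xx}$ is $\eta^{-2}$-Lipschitz, so a net of polynomial cardinality in $N_n$ combined with a union bound over the $N_n$ sites $x$ upgrades the pointwise estimate. Substituting $(\ast)$ back into the spectral identity then yields $\max_x |\psi_E(x)|^2 \prec N_n^{-1+\theta}$ simultaneously for every eigenvalue $E \in W \cap \sigma(H_n)$, which is the claim. I expect the main obstacle to be the hierarchical iteration of the Green-function bound: the number of scales grows with $n$, so any multiplicative loss per scale would compound into a power of $N_n$, and one must also control the fluctuations of the self-energy terms produced by the Schur complement at every level. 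It is presumably this bookkeeping that forces the restriction $\epsilon \in (-1, -1/2)$ rather than the physically-expected $\epsilon \in (-1, 0)$.
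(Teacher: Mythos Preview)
Your overall strategy matches the paper's: reduce the eigenfunction bound via the spectral identity to a high-probability uniform bound on $\im G_n(x,x;E+i\eta_n)$ with $\eta_n = N_n^{-1+\alpha}$, then pass from a fixed point to the supremum by a Lipschitz net and a union bound over the $N_n$ sites. The paper packages the hard input as Theorem~\ref{thm:locresbound}, which is exactly your $(\ast)$.

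The substantive divergence is in how the one-step recursion is carried out. You propose a Schur-complement/self-energy analysis conditioned on $H_{n-1}\oplus H_{n-1}'$; the paper instead runs the Dyson Brownian motion $\Phi_{N_n}(t)$ for time $t_n$ and tracks $G$ along the random characteristic curves $\dot\gamma = -S_t(\gamma)$ of Section~\ref{sec:dbm}. The payoff of the characteristic method is a clean moment recursion (Proposition~\ref{thm:gfuncmoment}): along the stopped characteristic, the $q$-th moment of $\im G$ grows by at most a factor $(1 - Cq/\sqrt{N_r\eta_r})^{-1}$ per level, and the product of these over $\sim n$ levels stays bounded provided $q$ grows only polynomially in $n$. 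This is precisely the ``no multiplicative blow-up'' bookkeeping you worry about, and the flow handles it because it essentially linearizes the resolvent evolution and replaces the smoothing by an upward drift of the spectral parameter, $\im w \approx \im z + \mathcal{O}(t_n)$. A static Schur-complement scheme would instead have to control the fluctuations of the self-energy at scale $\eta_n$ using only the coarser information available at level $n-1$; this is not obviously impossible, but it is the step your sketch leaves open, and the paper's dynamical route sidesteps it entirely.

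One correction on the source of the constraint: the restriction $\epsilon\in(-1,-1/2)$ does \emph{not} come from the Green-function iteration. It enters earlier, in the local law for the trace $S_r(z)$ (Theorem~\ref{thm:locallaw}), where one needs a coarse scale $\tilde\eta_r = N_r^{-1+\tilde\alpha}$ with $1/2 < \tilde\alpha < -\epsilon$ so that the random-Schr\"odinger concentration of Section~\ref{sec:schrodinger} applies while the DBM time $t_r$ still dominates $\tilde\eta_r$. Once Theorem~\ref{thm:locallaw} is in hand, the iteration of Section~\ref{sec:localresbounds} runs for any $\epsilon<0$ without further constraint; the paper says this explicitly at the end of Section~\ref{sec:introduction}.
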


This result stands in stark contrast to the behavior in the analogous regime of the closely related Rosenzweig-Porter model, where the eigenfunctions are extended but not uniformly across the volume~\cite{nonergodic, benignibourgade}. We thus confirm recent numerical work \cite{PhysRevE.98.042116} suggesting that the ultrametric ensemble does not possess an intermediate phase in the regime $ \epsilon \in (-1,0) $.

The proofs of these theorems take a dynamical perspective, using the fictitious dynamics obtained by representing the Gaussian perturbations $\Phi_{n,n}$ in terms of Brownian motion in the real symmetric matrices. It is not hard to see that the ultrametric ensemble can be constructed in a recursive manner by initializing a $1 \times 1$ random $ \mathcal{N}(0,2) $ entry $H_0$ and setting
\beq{eq:dbm} H_n = H_{n-1} \oplus H_{n-1}^\prime + \Phi_{N_n}(t_n).\eeq
Here, $H_{n-1}^\prime$ is an independent copy of $H_{n-1}$ and
\beq{eq:goe} \langle \delta_y, \Phi_N(t) \delta_x \rangle = \sqrt{\frac{1 + \delta_{xy}}{N}} \, B_{xy}(t),\eeq
with a symmetric array of standard Brownian motions $\{B_{xy}\}$ indexed by $ x ,y \in \{ 1,\dots , N\} $. From this point of view, the conjectured critical point $\epsilon = 0$ is natural since it corresponds to the local equilibration time $N_n^{-1}$ of the Dyson Brownian motion~\cite{MR2810797} governing the evolution of the spectrum under~\eqref{eq:dbm}.

Our analysis is based on strong probabilistic estimates for the local resolvent entries
\[G_n(x;z) = \langle \delta_x, (H_n-z)^{-1} \delta_x \rangle\]
on almost microscopic scales $\im z \approx N_n^{-1 + \alpha}$ with $ \alpha > 0 $ small.  Given some control on the same scale of the Stieltjes transform of the empirical eigenvalue measure,
\[S_n(z) = \frac{1}{N_n} \tr (H_n-z)^{-1},\]
our analysis shows that the effect of the Gaussian perturbations $\Phi_{N_n}(t_n)$ is to slowly lift the complex energy parameter in the complex plane:
\[G_n(x,z) \approx G_{n-1}(x, w), \quad \im w \approx \im z + \mathcal{O}(t_n).\]
Iterating this observation and applying trivial bounds to the resolvent show that $G_n(x;z)$ grows very slowly in $n$ even on almost microscopic scales. Such a bound immediately yields the delocalization of the eigenvectors in Theorem~\ref{thm:eigenfunctions}, whereas the continuity of the spectral measures in Theorem~\ref{thm:measures} now follows from little more than weak convergence.

The aforementioned local control on the spectral measures has featured prominently in the random matrix theory literature, where it is the first step of the ``three-step-strategy'' of Erd\"{o}s, Schlein, and Yau~\cite{MR2810797} for proving Wigner-Dyson-Mehta universality of the local statistics. Due to the Gaussian nature of our problem, only one additional step is needed for universality, which is contained in the work of Landon, Sosoe, and Yau~\cite{landonsosoeyauarxiv}. The objects of interest are the local versions of the $k$-point correlation functions, defined as the $k$-th marginals of the symmetrized eigenvalue density $\rho_{H_n}$:
\beq{eq:rhokdef}\rho^{(k)}_{H_n}(\lambda_1, .. \lambda_k) =  \int_{\rr^{2^n - k}} \! \rho_{H_n} (\lambda_1, \dots, \lambda_{2^n}) \, d\lambda_{k+1} \dots \, d\lambda_{2^n}.\eeq
The precise formulation of the theorem considers the difference
\begin{align}\label{eq:psidef} \Psi_{n,E}^{(k)}(\alpha_1, \dots ,\alpha_k) &= \rho^{(k)}_{H_n}\left( E + \frac{\alpha_1}{N_n \, \rho_{n, fc}(E)}, \dots, E + \frac{\alpha_k}{N_n \,\rho_{n, fc}(E)} \right)\nonumber \\
 &-  \rho^{(k)}_{GOE}\left(\frac{\alpha_1}{N_n \,\rho_{sc}(0)}, \dots,  \frac{\alpha_k}{N_n \, \rho_{sc}(0)} \right).
\end{align}
Here, $\rho^{(k)}_{GOE}$ denotes the $k$-point correlation function of the $N_n \times N_n$ Gaussian Orthogonal Ensemble and $\rho_{sc}$ is the density of the semicircle law. The function $\rho_{n, fc}(E)$ is defined as the density of the measure whose Stieltjes transform $M(z)$ solves
\beq{eq:Mdef} M(z) =\int\! \frac{\rho_{n-1}(\lambda)}{\lambda - z - t_n M(z)} \, d\lambda.\eeq

\begin{theorem}\label{thm:localstats} Suppose $\epsilon \in (-1, -1/2)$, $E \in W$ and $k \geq 1$. Then
\[\lim_{n \to \infty} \int_{\rr^k} \! O(\alpha) \Psi_{n,E}^{(k)}(\alpha) \, d\alpha = 0\]
for every $O \in C_c^\infty(\rr^k)$.
\end{theorem}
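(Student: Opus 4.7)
The plan is to apply the three-step strategy of Erd\H{o}s, Schlein and Yau for bulk universality, which collapses here to essentially a single step because the ultrametric ensemble is already exactly Gaussian divisible at the relevant scale. The recursion \eqref{eq:dbm} presents $H_n$ as the matrix $M_0 := H_{n-1} \oplus H_{n-1}^\prime$ evolved along Dyson Brownian motion for time $t_n = N_n^{-(1+\epsilon)}$. Since $\epsilon \in (-1, -1/2)$ gives $1+\epsilon \in (0,1/2)$, we have $N_n^{-1/2} \leq t_n \leq 1$, comfortably above the DBM local equilibration scale $N_n^{-1}$.

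The first step is to upgrade the inductive resolvent estimates supporting Theorems~\ref{thm:measures} and~\ref{thm:eigenfunctions} into a genuine local law for $M_0$ at almost microscopic scales. Concretely, for every small $\alpha > 0$, one shows that both the entrywise resolvent $G_{n-1}(x;z)$ and its averaged analogue $S_{n-1}(z)$ are stochastically close to the Stieltjes transform of $\rho_{n-1}$, uniformly in $x \in B_{n-1}$ and for $\operatorname{Re} z$ in a neighborhood of the bulk set $W$ and $\operatorname{Im} z \geq N_{n-1}^{-1+\alpha}$. The bulk-set hypothesis~\eqref{eq:dosassumption} supplies the required lower bound on $\rho_n(E)$, and hence on the free convolution density $\rho_{n,fc}(E)$ characterised by the subordination equation~\eqref{eq:Mdef}, so the rescaling in~\eqref{eq:psidef} is legitimate.

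The second step is to invoke the universality theorem of Landon, Sosoe and Yau~\cite{landonsosoeyauarxiv} for Gaussian divisible ensembles with general initial data. Given a local law at some scale $\eta_* \ll t$ for the initial matrix together with a lower bound on the free convolution density at $E$, their result asserts precisely that the rescaled $k$-point correlation functions of $M_0 + \sqrt{t}\,\mathrm{GOE}$ converge weakly to those of the GOE. Applied with $t = t_n$ and $\eta_* = N_{n-1}^{-1+\alpha}$, the ratio $t_n/\eta_*$ is of order $N_n^{|\epsilon| - \alpha}$ up to logarithmic corrections, which diverges whenever $\alpha < |\epsilon|$---a generous condition since $|\epsilon| > 1/2$. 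Because $H_n$ is already of Gaussian divisible form at the target time, the customary third (Green's function comparison) step of the ESY strategy against a non-Gaussian Wigner ensemble is unnecessary, and Theorem~\ref{thm:localstats} is the direct conclusion of \cite{landonsosoeyauarxiv}.

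The principal obstacle lies entirely in the first step: one must leverage the paper's inductive resolvent analysis to identify the correct deterministic equivalent (the Stieltjes transform of $\rho_{n-1}$) and verify concentration around it on scales just above $N_{n-1}^{-1}$, uniformly in the bulk set $W$. Once this input is in hand, the DBM universality machinery applies as a black box.
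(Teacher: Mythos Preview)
Your approach is the same as the paper's---apply Landon--Sosoe--Yau to the representation $H_n = (H_{n-1}\oplus H_{n-1}') + \Phi_{N_n}(t_n)$ with $t_n \gg \eta_n$---but you misidentify the required input. The result of~\cite{landonsosoeyauarxiv} does not ask for concentration of $S_{n-1}(z)$ (let alone of the entrywise $G_{n-1}(x;z)$) around a deterministic profile at near-microscopic scales; it asks only for $\eta_*$-\emph{regularity} of the initial data, meaning two-sided bounds $K_l \le \im S \le K_u$ and $|S| \le K_u \log N$ on a domain $W_0 + i(\eta_*, 10)$. This is exactly the content of Theorem~\ref{thm:locallaw}, which the paper simply invokes. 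The paper never proves---and does not need---fine-scale concentration about the Stieltjes transform of $\rho_{n-1}$.

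The only place a genuine concentration statement enters is Lemma~\ref{thm:coarselocallaw}, at the much coarser scale $\tilde{\eta}_r \gg N_r^{-1/2}$, and it is used solely to reconcile the \emph{random} free-convolution scaling density (determined by the empirical spectral measure of $H_{n-1}\oplus H_{n-1}'$, which is what~\cite{landonsosoeyauarxiv} literally outputs for a fixed initial matrix) with the \emph{averaged} $\rho_{n,fc}$ defined through~\eqref{eq:Mdef}. So what you call ``the principal obstacle''---a fine-scale local law with deterministic equivalent---is neither available from the paper's methods nor needed; the actual input is already provided by Theorem~\ref{thm:locallaw}, and the remaining work is the minor bookkeeping of passing from conditional to unconditional $k$-point functions via Lemma~\ref{thm:coarselocallaw}.
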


We conclude this introduction by contrasting our results with those of~\cite{resflow}, which proved that
\begin{itemize}
\item in the regime $\epsilon \in (0, \infty) $, the eigenvectors are localized and the local statistics converge to a Poisson point process.
\item there is a mean-field regime $ \epsilon \in (-\infty, -1)$, for which the last term $ \sqrt{t_n} \,  \Phi_{n,n} $ in $H_n$ already forces delocalization with an estimate that degrades as $ \epsilon \to -1 $ (see also~\cite{MR3068390}) and Wigner-Dyson-Mehta universality of the local statistics (see also~\cite{landonsosoeyauarxiv}).
\end{itemize}
 
Since the sums defining $H_n$ do not remain bounded if $\epsilon \in (-\infty, -1)$, the results of~\cite{resflow} for this regime rescaled the Hamiltonian by a factor 
\[Z_n^2 =  \sum_{y \in B_n} \ee \Big| \langle \delta_y, \sum_{r =0}^n \sqrt{t_r} \, \Phi_{n,r}   \delta_x\rangle \Big|^2 = \oh\left(2^{-(1+\epsilon)n}\right).\]
that keeps the spectrum of $ H_n $ on order one. The methods of the present paper can easily be adapted to prove that Theorem~\ref{thm:eigenfunctions} also holds in the mean-field regime, yielding significantly improved bounds that do not blow up as $\epsilon \to -1$. We also note that, in the mean-field regime, any compact $ W \subset (-2,2) $ is a bulk set due to the validity of the semicircle law $ \sqrt{(4-E^2)_+}/(2\pi) $~\cite{MR3068390,resflow}.

Our results therefore leave open the regime $\epsilon \in (-1/2, 0)$, for which we expect all of the main results in this paper to remain valid. Indeed, the assumption $\epsilon < -1/2$ seems of a technical nature. The difficulty is that we cannot prove the necessary local bounds on the spectrum contained in Theorem~\ref{thm:locallaw} for $\epsilon \in (-1/2, 0)$. If we assumed the validity of this theorem for all parameters $\epsilon < 0$, our delocalization results would extend to the entirety of the regime as well. Nevertheless, Theorem~\ref{thm:locallaw} seems to be one of very few results of its kind that are valid when the model does not have a large spread in the sense of~\cite{MR3068390}.

The paper is organized as follows. Section~\ref{sec:schrodinger} proves some local bounds for a general random Schr\"{o}dinger operator and Section~\ref{sec:dbm} expands some previous results on Dyson Brownian motion~\cite{nonergodic} to better suit the current setting. These results figure in the local bounds for the ultrametric ensemble in Section~\ref{sec:locallaw} and the estimates for the Green functions in  Section~\ref{sec:localresbounds}. Finally, Section~\ref{sec:proof} translates these estimates into our main results.

\section{Local Bounds for Random Schr\"{o}dinger Operators}\label{sec:schrodinger}
In our proof of the local law for the ultrametric ensemble, we will require some concentration of measure results for the Stieltjes transform of the empirical eigenvalue measure of a generic random Schr\"{o}dinger operator. Since these bounds may be of some independent interest, this short section presents them in a more general setting. We consider
\[H = A + V\]
where $A$ is a fixed hermitian $N \times N$ matrix and
\[V = \operatorname{diag}(V_1, \dots, V_N)\]
are  independent random variables. Our goal is to understand the behavior of
\[S(z) = \frac{1}{N} \tr (H-z)^{-1}\]
on small scales $\im z = o(1)$ as $ N \to \infty $. The content of the first  theorem is that $ S(z)$ does not fluctuate on scales larger than $\im z \gg N^{-1/2}$. The statement uses $N$-independent constants $C, c \in (0, \infty)$, whose exact value may change from line to line. We will keep this convention for the rest of this paper.
\begin{theorem}\label{thm:schroedingerlaws} For any $ \mu > 0 $ and $z = E + i\eta$, 
\begin{equation}
\pp\left(\left| S(z) - \ee S(z) \right| > \mu \right) \le C\exp\left(-cN(\mu \eta)^2\right).
\end{equation}
\end{theorem}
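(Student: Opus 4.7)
The plan is to prove the concentration bound via the Azuma–Hoeffding inequality applied to the Doob martingale
\[M_k = \ee[S(z) \mid \mathcal{F}_k], \qquad \mathcal{F}_k = \sigma(V_1, \ldots, V_k),\]
for which $M_0 = \ee S(z)$ and $M_N = S(z)$. The main step is to show that the martingale differences $D_k = M_k - M_{k-1}$ are bounded almost surely by a constant of size $O(1/(N\eta))$, uniformly in the remaining $V_i$; once this is achieved, the claim follows by a routine bounded-difference inequality.

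To bound $D_k$, I would introduce the $(N-1)\times(N-1)$ principal submatrix $H^{(k)}$ obtained by deleting the $k$-th row and column of $H$, and set $\tilde S^{(k)}(z) = N^{-1} \tr(H^{(k)} - z)^{-1}$. Since $\tilde S^{(k)}$ does not depend on $V_k$, one has $\ee[\tilde S^{(k)} \mid \mathcal{F}_k] = \ee[\tilde S^{(k)} \mid \mathcal{F}_{k-1}]$, so subtracting $\tilde S^{(k)}$ freely inside the Doob differences gives
\[|D_k| \le 2 \, \sup |S(z) - \tilde S^{(k)}(z)|,\]
where the supremum runs over all configurations of the $V_i$. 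The key deterministic estimate $|S(z) - \tilde S^{(k)}(z)| \le \pi/(N\eta)$ comes from Cauchy's interlacing theorem, which forces the eigenvalue counting functions to satisfy $|N_H(\lambda) - N_{H^{(k)}}(\lambda)| \le 1$ pointwise in $\lambda$. Writing the resolvent traces as Stieltjes integrals against $dN_H$ and $dN_{H^{(k)}}$ and integrating by parts yields
\[\tr(H - z)^{-1} - \tr(H^{(k)} - z)^{-1} = \int_\rr \frac{N_H(\lambda) - N_{H^{(k)}}(\lambda)}{(\lambda - z)^2} \, d\lambda,\]
whose modulus is bounded by $\int_\rr d\lambda / |\lambda - z|^2 = \pi/\eta$.

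Substituting $|D_k| \le 2\pi/(N\eta)$ into Azuma–Hoeffding then delivers
\[\pp\bigl(|S(z) - \ee S(z)| > \mu\bigr) \le 2 \exp\!\left(-\frac{\mu^2}{2 N \cdot (2\pi/(N\eta))^2}\right) = 2 \exp\bigl(-c N (\mu \eta)^2\bigr),\]
which is the stated bound. I do not expect a serious obstacle: the argument rests only on the deterministic interlacing inequality combined with a standard martingale concentration inequality, and, pleasingly, requires no distributional hypothesis on the $V_i$ beyond independence. The only mildly delicate point is recognising that $1/|\lambda - z|^2$ is a Cauchy-type kernel with $L^1$ mass $\pi/\eta$, which pairs with the $1/N$ factor from the normalisation of $S(z)$ to deliver precisely the advertised martingale-increment scale.
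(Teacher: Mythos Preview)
Your proof is correct and follows essentially the same route as the paper: bounded differences via eigenvalue interlacing, combined with the martingale form of McDiarmid's inequality (Azuma--Hoeffding for the Doob martingale). The only cosmetic difference is that the paper compares $H$ to the matrix obtained by changing a single $V_k$ (a rank-one perturbation) rather than to the principal minor $H^{(k)}$, but both comparisons yield the same $\pi/(N\eta)$ increment bound.
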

\begin{proof}
We consider $ S(z) $ as a function of $ N $ independent random variables $ V = (V_1, \dots, V_N)$. If $V$ and $\tilde{V}$ differ in only one variable, then $H = A+V$ and $\tilde{H}  = A+\tilde{V}$ differ by a rank-one perturbation. Therefore, the eigenvalues of $H$ and $\tilde{H}$ interlace and
\begin{align}\label{eq:cgk} |S(z) - \tilde{S}(z)| &\le  \frac{1}{N} \int \!  \left| \lambda - z \right|^{-2} \left| \tr 1_{(-\infty, \lambda)}(H) - \tr 1_{(-\infty, \lambda)}(\tilde{H}) \right| \, d\lambda \le \frac{\pi}{N\eta}.
\end{align}
The claim now follows immediately from McDiarmid's inequality, which states that
\[\pp\left(|S(z) - \ee S(z) | > \lambda \sigma \right) \le C \, \exp\left(-c\lambda^2\right)\]
with $\sigma^2 = \pi^2 \sum_{x=1}^N (N\eta)^{-2}  = \pi^2 / \left(N \eta^2\right)$.
\end{proof}

The second theorem gives upper bounds on $ S(z) $ on the optimal scale $\im z \gg N^{-1}$. Essentially, it states that the Wegner estimate for random Schr\"odinger operators has exponential tails even down to mesoscopic scales in $ \im z $.
\begin{theorem}\label{thm:schroedingerlaws2} Suppose each $V_x$ is drawn from a common bounded density $\rho \in L^1 \cap L^\infty$. Then
\begin{equation}\label{eq:upperbd} 
\pp\left(\im S(z) >  c + \mu \right) \le \exp\left(-\mu  N\eta \right)
\end{equation}
and
\begin{equation}\label{eq:upperbd2} 
\pp\left( |S(z) |  >  c \left[ 1+  \log(1+ \eta^{-1}) \right] \right) \le \frac{C}{\eta^2} \exp\left(-  N\eta \right)
\end{equation}
for all $ \mu > 0 $ and $z = E + i\eta$.
\end{theorem}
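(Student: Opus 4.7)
The plan is to prove the two inequalities in sequence, with the first serving as input for the second. For the $\im S(z)$ bound, I would apply the exponential Chebyshev inequality, reducing the claim to a moment generating function estimate of the form $\ee \exp(N\eta \im S(z)) \le \exp(cN\eta)$. Noting that $N\eta \im S(z) = \sum_j \eta^2/((\lambda_j - E)^2 + \eta^2)$ is essentially a smoothed count of eigenvalues within distance $\eta$ of $E$, a dyadic decomposition gives
\[
N\eta \im S(z) \le C \sum_{k \ge 0} 2^{-2k} N_k(H), \quad N_k(H) := \#\{j : |\lambda_j - E| \le 2^k \eta\},
\]
reducing the claim to an exponential Wegner estimate $\pp(N_I(H) > (c+\mu)N|I|) \le \exp(-\mu N|I|)$ at each dyadic scale $I = [E - 2^k \eta, E + 2^k\eta]$ and a geometrically summable union bound.

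To prove this exponential Wegner estimate, I would use the Schur complement representation
\[
\im G_{xx}(z) = \frac{\alpha_x}{(V_x - b_x)^2 + \alpha_x^2}, \quad \alpha_x = \eta + \im \xi_x(z) \ge \eta,
\]
with $\alpha_x, b_x$ depending only on $\{V_y\}_{y \ne x}$. This yields the spectral averaging identity $\int \langle \delta_x, 1_I(H)\delta_x\rangle \, dV_x \le |I|$, which can be iterated to produce a multi-particle Wegner--Minami bound of the form $\ee \binom{N_I(H)}{k} \le (C\|\rho\|_\infty N|I|)^k / k!$. Combined with the algebraic identity $e^{sN_I} = \sum_k (e^s - 1)^k \binom{N_I}{k}$, the crucial factorial denominator yields $\ee e^{s N_I} \le \exp(C(e^s - 1)\|\rho\|_\infty N|I|)$, from which a second application of exponential Chebyshev gives the desired exponential tail.

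For the bound on $|S(z)|$, I would write $|S(z)| \le |\re S(z)| + \im S(z)$ and bound the real part via a dyadic decomposition of its Hilbert-transform-like expression $|\re S(z)| \le \int |\lambda - E|^{-1} d\mu_N(\lambda) \lesssim \sum_{k=0}^{K} \im S(E + i2^k\eta)$, where $K \sim \log(\eta^{-1})$ suffices to cover the (bounded part of the) spectrum. Applying the first inequality at each of the $K$ dyadic scales with a union bound produces the $\log(1+\eta^{-1})$ factor from the sum of constants at each scale, and crudely bounding $K$ by $\eta^{-2}$ accounts for the polynomial prefactor.

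The main obstacle is the multi-particle Wegner--Minami estimate with the factorial normalization $\ee\binom{N_I}{k} \le (CN|I|)^k/k!$: the naive iterated spectral averaging yields only $\ee \prod_{x \in S} \mu_x(I) \le (C|I|)^{|S|}$ without the essential $1/|S|!$, which would give only polynomial rather than exponential tails. Obtaining the factorial requires exploiting the more refined structure by which $G(z)$ depends on each $V_x$ as a Möbius transformation, so that the integrand acquires additional cancellations under repeated integrations over distinct $V_x$.
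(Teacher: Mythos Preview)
Your plan for the first inequality is workable but takes a substantially harder route than the paper. You reduce to a multi-site Minami--type factorial moment bound $\ee\binom{N_I}{k}\le (CN|I|)^k/k!$, which is a nontrivial result (of Combes--Germinet--Klein type) that you yourself flag as the main obstacle. The paper bypasses this obstacle entirely with a short Gr\"onwall argument for the moment generating function $g(t)=\ee\exp(t\,\im S(z))$. Differentiating gives
\[
g'(t)=\ee\Big[e^{t\,\im S(z)}\,\frac{1}{N}\sum_x \im G_{xx}(z)\Big],
\]
and the rank-one fluctuation bound $|\im S(z)-\im S(z)|_{V_x\to v}|\le \pi/(N\eta)$ lets one decouple the exponential from $V_x$ at the cost of a factor $e^{t\pi/(N\eta)}$. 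One then applies \emph{single-site} spectral averaging $\eesub{x}\im G_{xx}\le \pi\|\rho\|_\infty$ (no iteration, no factorial) to obtain $g'(t)\le \pi\|\rho\|_\infty e^{t\pi/(N\eta)}g(t)$, whence $g(N\eta)\le e^{cN\eta}$ and the exponential Chebyshev bound follows. So the ``additional cancellations under repeated integrations'' you anticipate are not needed at all.

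For the second inequality your idea is close in spirit to the paper's. The paper uses the harmonic-conjugate identity
\[
\re S(E+i\eta)=\int \im S(E+t+i\eta/2)\,\frac{1}{\pi}\frac{t}{t^2+(\eta/2)^2}\,dt,
\]
splits at $|t|=1$ (the far part is $\le 1$ by the total-mass identity $\int \im S(E+t+i\eta/2)\,dt=\pi$), and bounds the near part by $\sup_{|t|\le 1}\im S(E+t+i\eta/2)$ times the logarithmically divergent integral of the kernel. The supremum is then controlled by the first inequality via a Lipschitz-plus-grid union bound over $t\in[-1,1]$, which produces the $\eta^{-2}$ prefactor. Your dyadic decomposition in the imaginary direction $|\re S(z)|\lesssim \sum_{k\le K}\im S(E+i2^k\eta)$ is a valid alternative; just note that without an a priori bound on the spectrum you still need a total-mass argument for the tail, and that the paper's version varies the real part of the argument rather than the imaginary part.
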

\begin{proof}
For the proof of the first statement, we consider
\[g(t) = \ee \exp(t \, \im S(z)).\]
Letting $\eesub{x}$ denote the expectation over a single random variable $V_x$, we estimate
\begin{align*} g^\prime(t) &= \ee\left[ \exp(t \, \im S(z)) \frac{1}{N}\sum_x \im \langle \delta_x, (H-z)^{-1} \delta_x \rangle \right]\\
&\le \ee\left[ \exp\left(t \, \im S(z) + t\pi(N\eta)^{-1}\right) \frac{1}{N}\sum_x  \eesub{x} \, {\im \langle \delta_x, (H-z)^{-1} \delta_x \rangle}\right]\\
&\le \pi \|\rho\|_\infty \exp\left(t\pi(N\eta)^{-1} \right) g(t).
\end{align*}
using the fluctuation bound~\eqref{eq:cgk} and the spectral averaging principle (see~\cite{MR3364516}). By Gr\"{o}nwall's inequality we conclude that
\[g(t) \le \exp\left(t e^{t\pi(N\eta)^{-1}} \pi \|\rho\|_\infty \right),\]
so setting $t = N\eta $ and using an exponential Chebyshev estimate proves the first part of the theorem.

The second part follows from the representation
\[
\re S(z) =  \int \im S(E+t + i \eta/2) \, q_{\eta/2}(t) dt , \quad q_{\eta}(t) := \frac{1}{\pi} \frac{t}{t^2+\eta^2} , 
\]
which, after splitting the domain of integration, implies the bound
\begin{align*}
 |\re S(z) |    &\leq 1+ \sup_{t \in [-1,1]} \im S(E+t + i \eta/2) \int_{-1}^1 \left| q_{\eta/2}(s) \right| ds\\
 &=  1+  \frac{\log \left(1+ 4\eta^{-2} \right)}{\pi}  \sup_{t \in [-1,1]} \im S(E+t + i \eta/2).
\end{align*}
Since $ S(E+t + i \eta/2)  $ is $ 4\eta^{-2} $-Lipschitz continuous in $ t $, the claim now follows from~\eqref{eq:upperbd} and the union bound.
\end{proof}

We note here that when $A$ is the restriction of the Laplacian to some finite lattice in $\zz^d$, it is known (see \cite{1064-5616-206-1-93} and references therein) that $S(z)$ obeys the central limit theorem on macroscopic scales in the sense that  $\sqrt{N} \left( S(z) - \mathbb{E}[S(z)]\right)$ converges to a Gaussian random variable for fixed $ z \in\cp $.

\section{Results on Dyson Brownian Motion} \label{sec:dbm}
In this section we expand some results from~\cite{nonergodic}, which will be of use in controlling the Dyson Brownian motion~\eqref{eq:goe} in the iterative construction~\eqref{eq:dbm} of the ultrametric ensemble. In~\cite{nonergodic}, we considered
\[H(t) = V + \Phi_N(t), \]
where $\Phi_N(t) $ is a $ N \times N $ Dyson Brownian motion as in~\eqref{eq:goe} and $V =  \operatorname{diag}(V_1,\dots , V_N) $ is a deterministic initial condition that can be taken diagonal without loss of generality. We then studied the evolution of
\[S_t(z) = \frac{1}{N} \tr \left(H(t) - z\right)^{-1} , \qquad G_t(x,z) = \langle \delta_x , \left(H(t) - z\right)^{-1} \delta_x \rangle\]
up to a time $T = N^{-1 + \epsilon}$ for spectral parameters chosen from
\[\Omega = (W_1, W_2) + i(\eta, 10), \qquad \eta = N^{-1 + \alpha}\]
with $ \alpha \in (0,1) $ arbitrarily small. This was accomplished by tracking the flow along the random characteristic curves $\gamma(t, z)$ solving the initial-value problem
\begin{equation}\label{eq:ODE}
\dot{\gamma}(t, z) = -S_t(\gamma(t,z)), \qquad \gamma(0, z) = z.
\end{equation}
The next proposition summarizes two of the three main technical results of~\cite{nonergodic}. For its statement, we introduce the stopping time
\[\tau_{z} = \inf\{t: \im \gamma(t,z) = \eta/2\} \]
and the stopped characteristic
\[\xi_t(z) = \gamma(t \wedge \tau_z, z).\]
The first point below is the content of Theorem 2.1, while the second may be found in the proof of Theorem 2.3 of~\cite{nonergodic}. 

\begin{proposition}[cf.~\cite{nonergodic}]\label{prop:char} Let $z \in \Omega$ and let $z_1, z_2 \in \cp$.
\begin{enumerate}
\item
The change in $S_t$ along the characteristic curve satisfies
\beq{eq:mgboundformula} \pp\left(\sup_{t \le \tau_z} |S_t(\xi_t(z)) - S_0(z)| > \frac{4}{\sqrt{N \eta}} \right) \le C \exp\left(-c N \eta \right).
\eeq
\item The flow $\gamma(t,\cdot)$ obeys the Lipschitz-estimate
\begin{equation}\label{eq:Gronwall}
\left| \gamma(t,z_1) - \gamma(t,z_2) \right| \leq \sqrt{\frac{\im z_1 \, \im z_2 } { \im \gamma(t,z_1)\,   \im \gamma(t, z_2)}} \, | z_1-z_2 |
\end{equation}
as long as $ \im \gamma(t, z_1) , \im \gamma(t, z_2)> 0$.
\end{enumerate}
\end{proposition}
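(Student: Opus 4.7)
Both claims rest on direct analysis of the characteristic ODE. The Lipschitz estimate~\eqref{eq:Gronwall} is a deterministic Gronwall argument carried out pointwise in the randomness, while the martingale-type bound~\eqref{eq:mgboundformula} exploits the fact that the characteristic was designed to annihilate the drift in the stochastic Burgers equation obeyed by $S_t$, reducing the problem to a martingale concentration inequality.

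For~\eqref{eq:Gronwall}, set $f(t) := \gamma(t,z_1) - \gamma(t,z_2)$ and abbreviate $\gamma_j := \gamma(t,z_j)$. Let $\mu_t$ denote the empirical eigenvalue measure of $H(t)$, so that $S_t(w) = \int (\lambda-w)^{-1}\,d\mu_t(\lambda)$. The characteristic ODE together with the resolvent identity yields
\[
\frac{\dot f(t)}{f(t)} = - \int \frac{d\mu_t(\lambda)}{(\lambda - \gamma_1)(\lambda - \gamma_2)}.
\]
Cauchy--Schwarz followed by AM--GM bounds $|2\re(\dot f/f)|$ by $I_1 + I_2$, where $I_j := \int |\lambda-\gamma_j|^{-2}\,d\mu_t = \im S_t(\gamma_j)/\im\gamma_j$, while the imaginary part of the ODE gives $(d/dt)\log \im\gamma_j = -I_j$. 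Summing the three contributions,
\[
\frac{d}{dt}\bigl[\log|f|^2 + \log\im\gamma_1 + \log\im\gamma_2\bigr] = 2\re(\dot f/f) - I_1 - I_2 \le 0,
\]
so this quantity is non-increasing in $t$; exponentiating and rearranging is exactly~\eqref{eq:Gronwall}.

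For~\eqref{eq:mgboundformula}, a direct It\^{o} calculation for the matrix DBM $\Phi_N$ produces a decomposition
\[
dS_t(z) = S_t(z)\,\partial_z S_t(z)\, dt + R_t(z)\, dt + dM_t(z),
\]
where the It\^{o} correction obeys $|R_t(z)| \le C \,\im S_t(z)/[N^2 (\im z)^2]$ and the martingale part satisfies $d\langle M(z)\rangle_t/dt \le C\,\im S_t(z)/[N^2(\im z)^3]$. Substituting $z \mapsto \gamma(t,z)$ and using $\dot\gamma = -S_t(\gamma)$ cancels the Burgers drift $S_t\partial_z S_t$ exactly, so for $t \le \tau_z$,
\[
S_t(\xi_t(z)) - S_0(z) = M_{t\wedge\tau_z} + \int_0^{t\wedge\tau_z}\! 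R_s(\gamma(s,z))\, ds.
\]
The chain rule $d\im\gamma = -\im S_s(\gamma)\, ds$ converts each time-integral against $\im S_s/(\im\gamma)^k$ into a one-dimensional integral $\int_{\eta/2}^{\im z} u^{-k}\,du$, yielding $|\int R_s\, ds| \le C/(N^2\eta) \ll (N\eta)^{-1/2}$ and $\langle M\rangle_{\tau_z} \le C/(N^2 \eta^2)$. Dubins--Schwarz combined with Doob's maximal inequality then gives $\pp(\sup_{t \le \tau_z}|M_t| > \mu) \le C\exp(-cN^2\eta^2\mu^2)$, and setting $\mu = 4/\sqrt{N\eta}$ produces the claimed $C\exp(-cN\eta)$.

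The main technical hurdle is the It\^{o} decomposition, and in particular the bound $\sum_k |\lambda_k - z|^{-4} \le (\im z)^{-2}\sum_k|\lambda_k - z|^{-2}$ which makes the change-of-variables argument yield the sharp $N^{-2}(\im z)^{-2}$ for the accumulated quadratic variation rather than the na\"ive $N^{-2}(\im z)^{-3}$. Since these identities were carried out in detail in~\cite{nonergodic} for the same DBM set-up, we would import them directly rather than reprove them.
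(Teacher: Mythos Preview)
The paper does not itself prove Proposition~\ref{prop:char} but simply cites Theorems~2.1 and~2.3 of~\cite{nonergodic}; your sketch correctly reproduces those arguments, including the Gr\"onwall trick for~\eqref{eq:Gronwall} and the drift-cancellation/martingale-concentration scheme for~\eqref{eq:mgboundformula}. One inconsequential slip: for the GOE flow the It\^o remainder is $R_t = N^{-2}\sum_k(\lambda_k-z)^{-3}$, which yields $|R_t| \le C\,\im S_t/[N(\im z)^2]$ with $N$ rather than $N^2$ in the denominator, so the accumulated drift along the stopped characteristic is $O((N\eta)^{-1})$ rather than $O((N^2\eta)^{-1})$---still negligible compared to $4/\sqrt{N\eta}$.
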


The two results may be combined to control $S_t(\xi_t(z))$ for a continuum of points simultaneously. For this purpose, let
\beq{eq:aevent} \mathcal{A} =  \left\{ \sup_{z \in \Omega } \sup_{t \le \tau_z} |S_t(\xi_t(z)) - S_0(z)| \leq \frac{C}{\sqrt{N \eta}} \right\} . \eeq
\begin{lemma} \label{thm:continuum} The event $\mathcal{A}$ satisfies
\begin{equation}\label{eq:uniformchar}
\pp( \mathcal{A}) \geq 1 - C\exp\left(-c N \eta \right).
\end{equation}
\end{lemma}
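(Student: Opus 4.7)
The plan is a standard net-and-continuity argument combining the fluctuation bound (\ref{eq:mgboundformula}) with the Lipschitz estimate (\ref{eq:Gronwall}). I would first fix a $\delta$-net $\mathcal{N} \subset \Omega$ with mesh $\delta \asymp \eta^{5/2}/\sqrt{N}$, so that $|\mathcal{N}| = \oh(\delta^{-2}) = \oh(N/\eta^5)$ remains polynomial in $N$. Applying (\ref{eq:mgboundformula}) at each $z \in \mathcal{N}$ and taking a union bound, the event that
\[\sup_{t \leq \tau_z} |S_t(\xi_t(z)) - S_0(z)| \leq 4/\sqrt{N\eta} \quad \text{for every } z \in \mathcal{N}\]
has probability at least $1 - |\mathcal{N}|\, C\exp(-cN\eta) \geq 1 - C\exp(-cN\eta/2)$, since the polynomial prefactor is absorbed into the exponentially small tail when $N\eta = N^\alpha$ grows polynomially in $N$.

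To extend from $\mathcal{N}$ to all of $\Omega$, fix $z' \in \Omega$ and the nearest net point $z \in \mathcal{N}$ with $|z - z'| \leq \delta$. When $t \leq \min(\tau_z, \tau_{z'})$, both $\xi_t(z) = \gamma(t, z)$ and $\xi_t(z') = \gamma(t, z')$, and the triangle inequality yields
\[|S_t(\xi_t(z')) - S_0(z')| \leq |S_t(\gamma(t, z)) - S_0(z)| + |S_0(z) - S_0(z')| + |S_t(\gamma(t, z')) - S_t(\gamma(t, z))|.\]
The first term is $\leq 4/\sqrt{N\eta}$ on the net event. Since $S_0$ and $S_t$ are Lipschitz on $\{\im w \geq \eta/2\}$ with constant $\oh(\eta^{-2})$ and (\ref{eq:Gronwall}) gives $|\gamma(t,z) - \gamma(t,z')| = \oh(\delta/\eta)$, the remaining two terms contribute $\oh(\delta/\eta^3) = \oh(1/\sqrt{N\eta})$ with my choice of $\delta$.

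The main obstacle is the stopping-time mismatch: the argument above handles only $t \leq \min(\tau_z, \tau_{z'})$, but for $\tau_z < t \leq \tau_{z'}$ the frozen value $\xi_t(z) = \gamma(\tau_z, z)$ no longer equals $\gamma(t, z)$ and a direct Lipschitz comparison is unavailable. To bypass this, I would re-run the net argument after first verifying that (\ref{eq:mgboundformula}) of Proposition~\ref{prop:char}(1) continues to hold with $\tau_z$ replaced by the looser stopping time $\tau_z^\ast = \inf\{t : \im \gamma(t, z) = \eta/4\}$; this should follow from the same proof in~\cite{nonergodic} with only constants adjusted. Then (\ref{eq:Gronwall}), used to compare $\gamma(\tau_{z'}, z)$ with $\gamma(\tau_{z'}, z')$, shows that $\im \gamma(\tau_{z'}, z) \geq \eta/2 - \oh(\delta/\eta) \geq \eta/4$ whenever $\delta = o(\eta^2)$, forcing $\tau_{z'} \leq \tau_z^\ast$ deterministically on the net event. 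Both trajectories then remain in $\{\im w \geq \eta/4\}$ throughout $[0, \tau_{z'}]$, so the comparison above applies uniformly to all $z' \in \Omega$ and all $t \leq \tau_{z'}$, closing the argument.
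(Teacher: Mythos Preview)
Your argument is correct and matches the paper's almost line for line: both take a polynomial-size net (mesh $\eta^3/\sqrt{N\eta}=\eta^{5/2}/\sqrt{N}$), invoke Proposition~\ref{prop:char}(1) with the relaxed stopping time at level $\eta/4$ (the paper's $\hat{\tau}_z$, your $\tau_z^\ast$), apply a union bound, and then use the Lipschitz estimate~\eqref{eq:Gronwall} together with the $\eta^{-2}$-Lipschitz continuity of $S_t$ to extend from the net to all of $\Omega$. Your identification and resolution of the stopping-time mismatch via the looser level $\eta/4$ is exactly the paper's device.
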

\begin{proof} Let $\Lambda \subset \Omega$ be a finite grid such that $|\Lambda| \le C N^8 $ and $\operatorname{dist}(z, \Lambda) \le C \eta^3 /\sqrt{N\eta} $ for all $z \in \Omega$. Let
\[\hat{\tau}_z = \inf \{t > 0: \im \gamma(t, z) = \eta/4\}.\]
It is clear from the proof of Proposition~\ref{prop:char} that also
\[\pp\left(\sup_{t \le \hat{\tau}_z} |S_t(\xi_t(z)) - S_0(z)| > \frac{C}{\sqrt{N \eta}} \right) \le C \exp\left(-c N \eta \right)\]
with slightly altered constants. By the union bound, the event
\[\tilde{\mathcal{A} }=  \left\{ \sup_{z \in \Lambda } \sup_{t \le \hat{\tau}_z} |S_t(\xi_t(z)) - S_0(z)| \leq \frac{C}{\sqrt{N \eta}} \right\} \]
satisfies $ \pp( \tilde{\mathcal{A}}) \geq 1 - C N^8 \exp\left(-c N \eta \right) $ so it suffices to show that $\mathcal{A} \subset \mathcal{\tilde{A}}$. If $z \in \Omega$, let $z_0 \in \Lambda$ be such that $|z - z_0| \le C \eta^3 /\sqrt{N\eta} $. By the triangle inequality,
\[|S_t(\xi_t(z)) - S_0(z)| \leq |S_t(\xi_t(z_0)) - S_0(z_0)| + |S_0(z_0)- S_0(z)|  + |S_t(\xi_t(z)) - S_t(\xi_t(z_0))|.\]
The first term is bounded by $C/\sqrt{N\eta}$ by the definition of $\tilde{\mathcal{A}}$. The second is bounded by $ C \eta^{-2} |z-z_0| $ because of the Lipschitz-continuity of $S_0$. Finally, the third term is bounded by $ C \eta^{-3} |z - z_0| $ using the Lipschitz-continuity of $ S_t $ and the estimate~\eqref{eq:Gronwall}. This concludes the proof.
\end{proof}

Our next goal is to show that one can use the previous results to propagate a lower bound at time $t = 0$ in a spectral domain 
\[\tilde{\Omega} = (\tilde{W_1}, \tilde{W_2}) + i(\tilde{\eta_1}, \tilde{\eta_2}), \qquad \tilde{\eta}_1 = N^{-1 + \tilde{\alpha}}\]
with a coarser spectral parameter $  \tilde{\eta_1}  > \eta $ to a lower bound at time $t = T$ into the domain~$\Omega$ with the finer spectral parameter $\eta $. This result is valid if there are some $ K_l , K_u  \in (0,\infty) $ satisfying the initial bounds
\beq{eq:s0bound}
K_l \leq \inf_{z\in \tilde{\Omega}} \im S_0(z) \quad\mbox{and}\quad \sup_{z\in  \Omega}  |S_0(z) | \leq K_u \log N 
\eeq
and the compatibility conditions
\beq{eq:Tlower}
T K_l  > 2 \tilde{\eta_1}\quad\mbox{and }\quad 2 K_u  T \log N  \leq \min\left\{ W_1 -\tilde{W_1}  ,  \tilde{W_2} - W_2 , \tilde{\eta_2}  - 10 \right\} .
\eeq
\begin{theorem}\label{thm:bndprop}
Suppose that the event $ \mathcal{A} $ occurs and that~\eqref{eq:s0bound} and \eqref{eq:Tlower} hold. Then the following statements are valid for all $z \in \Omega$ when $N$ is sufficiently large.
\begin{enumerate}
\item There exists $w \in \tilde{\Omega}$ such that $\xi_T(w) = z$ and $\tau_w \geq T$.
\item $ \im S_{T } (z) \geq  K_l /2$.
\end{enumerate} 
\end{theorem}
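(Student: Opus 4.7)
The plan is to handle the two conclusions together, deriving Part~2 from Part~1 almost for free and concentrating the real work on constructing the preimage $w$. The construction will be based on a first-order approximation to the characteristic flow together with a Brouwer-degree argument.

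Granting Part~1 first, if $w \in \tilde{\Omega}$ satisfies $\xi_T(w) = z$ and $\tau_w \geq T$, then $\xi_T(w) = \gamma(T, w)$ and the event $\mathcal{A}$ gives
$$|S_T(z) - S_0(w)| = |S_T(\xi_T(w)) - S_0(w)| \leq C/\sqrt{N\eta}.$$
The lower bound in \eqref{eq:s0bound} then yields $\im S_T(z) \geq K_l - C/\sqrt{N\eta} \geq K_l/2$ for all sufficiently large $N$, which is Part~2.

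For Part~1, I would fix $z \in \Omega$ and work with the first-order approximation
$$\gamma(T, w) = w - T S_0(w) + \mathcal{O}(T/\sqrt{N\eta}),$$
obtained by integrating $\dot{\gamma}(t,w) = -S_t(\gamma(t,w))$ and substituting the bound $|S_t(\xi_t(w)) - S_0(w)| \leq C/\sqrt{N\eta}$ from $\mathcal{A}$. Setting $F(w) = w - T S_0(w)$, I would trace the image $F(\partial \tilde{\Omega})$ edge by edge: the hypothesis $T K_l > 2\tilde{\eta}_1$ in \eqref{eq:Tlower} combined with the lower bound $\im S_0 \geq K_l$ forces the bottom edge of $\tilde{\Omega}$ into $\{\im < 0\}$, while the compatibility condition $2 T K_u \log N \leq \min\{W_1 - \tilde{W}_1,\, \tilde{W}_2 - W_2,\, \tilde{\eta}_2 - 10\}$ combined with $|S_0| \leq K_u \log N$ sends the top edge into $\{\im > 10\}$ and the two vertical edges into $\{\re < W_1\}$ and $\{\re > W_2\}$ respectively. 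Consequently $F(\partial \tilde{\Omega})$ avoids $\Omega$ and winds once counterclockwise around every $z \in \Omega$, so $\Omega \subset F(\tilde{\Omega})$ by Brouwer's degree. A Rouch\'e-type perturbation transfers this surjectivity to the exact flow map $w \mapsto \gamma(T, w)$, since the error $\mathcal{O}(T/\sqrt{N\eta})$ is much smaller than the boundary margins above. Any preimage $w \in \tilde{\Omega}$ obtained this way automatically satisfies $\tau_w > T$, because $\im S_t(\gamma(t,w)) > 0$ makes $t \mapsto \im \gamma(t,w)$ decreasing with terminal value $\im z \geq \eta > \eta/2$.

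The main obstacle will be the topological step: I need the winding-number / degree computation to be quantitatively stable enough that the $\mathcal{O}(T/\sqrt{N\eta})$ error does not destroy surjectivity onto $\Omega$. A secondary technical point is that the event $\mathcal{A}$ in \eqref{eq:aevent} is quantified only over $z \in \Omega$, whereas the trajectories I use start at $w \in \tilde{\Omega}$; the grid-plus-Lipschitz argument of Lemma~\ref{thm:continuum} adapts to $\tilde{\Omega}$ verbatim, so this causes no real difficulty.
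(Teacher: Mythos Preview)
Your reduction of Part~2 to Part~1 is correct and matches the paper exactly. The difference lies in Part~1: you run the forward flow out of $\tilde{\Omega}$ and invoke a degree argument, whereas the paper runs the \emph{time-reversed} flow $\lambda(t,w)$ (defined by $\dot\lambda = S_{T-t}(\lambda)$, $\lambda(0,w)=w$) out of $\Omega$ and shows by a connectedness argument that the simply connected image $\lambda(T,\Omega)$ stays inside $\tilde{\Omega}$.

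This difference is not merely cosmetic; it matters because the hypotheses are asymmetric. Your edge-tracing of $F(\partial\tilde{\Omega})$ uses $|S_0(w)|\le K_u\log N$ on the vertical sides of $\tilde{\Omega}$ to send them into $\{\re<W_1\}$ and $\{\re>W_2\}$. But the second half of~\eqref{eq:s0bound} only asserts this upper bound on $\Omega$, not on $\tilde{\Omega}$; on the vertical edges of $\tilde{\Omega}$ at heights $\im w\in(\tilde\eta_1,10)$ you have nothing better than the trivial $|S_0(w)|\le 1/\tilde\eta_1$, and then $T/\tilde\eta_1$ is not small (indeed $T K_l>2\tilde\eta_1$ forces $T/\tilde\eta_1>2/K_l$). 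So the left/right edges of $\tilde\Omega$ need not land outside $(W_1,W_2)$, and the winding-number step breaks down. You correctly flag the analogous domain mismatch for the event $\mathcal{A}$ and say it is harmless because Lemma~\ref{thm:continuum} adapts; but that is a statement about the \emph{probability} of an enlarged event, whereas here you are proving a deterministic theorem under the fixed hypotheses~\eqref{eq:s0bound} and $\mathcal{A}$, and the $|S_0|$ bound on $\tilde\Omega$ is simply not among them.

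The paper sidesteps both mismatches by reversing the direction: starting the inverse flow from $w\in\Omega$, the displacement bound $|\gamma(T,z)-z|\le 2K_uT\log N$ of~\eqref{eq:gammabdupper} and the drop estimate~\eqref{eq:gammabdlower} are applied only at points of $\Omega$ (or $\Omega\cap\tilde\Omega$), where~\eqref{eq:s0bound} and $\mathcal{A}$ are available. The containment $\lambda(T,\Omega)\subset\tilde\Omega$ is then obtained by a connectedness argument rather than by degree. If you want to salvage your approach, you would need either to strengthen the hypothesis so that $\sup_{z\in\tilde\Omega}|S_0(z)|\le K_u\log N$, or to run the degree argument on the reverse flow from $\Omega$ instead.
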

\begin{proof}
Let $ z \in \Omega $ with $ \im  \gamma(T,z) > \eta/2 $. If $ \mathcal{A} $ occurs, integrating the characteristic~\eqref{eq:ODE} shows that
 \[
 \left| \gamma(T,z) - z - T S_0(z) \right| \leq \frac{C T}{\sqrt{N\eta} }.
 \]
In particular, the trajectory $\gamma(t, z)$ has to obey the bound
 \beq{eq:gammabdupper} \left| \gamma(T,z) - z \right| \leq 2 K_u T \log N
 \eeq
when $ N $ is sufficiently large. If, in addition, $z \in \tilde{\Omega}$, then also
 \beq{eq:gammabdlower} \im \gamma(T,z) \le \im z -  T K_l /2
 \eeq
by the first inequality in~\eqref{eq:s0bound}. Let $\lambda$ denote the time-reversed flow corresponding to~\eqref{eq:ODE} so that
\begin{equation}\label{eq:ODErev}
 \dot\lambda(t,w) = S_{T-t}(\lambda(t,w) ), \quad \lambda(0, w) = w .
\end{equation}
 for $ t \leq T $. By construction, we have $ \gamma(T,\lambda(T,w)) = w $ so the first assertion will follow from $ \lambda(T,\Omega) \subset  \tilde{\Omega} $. To prove this, note that $ \lambda(T,\Omega) $ is  simply connected and open in $\cp $, since the flow~\eqref{eq:ODErev} is a homeomorphism. It follows from~\eqref{eq:gammabdupper} and the second inequality in~\eqref{eq:Tlower} that the intersection of the boundary $ \partial \Omega $ with $  \tilde{\Omega}  $ is mapped to $  \tilde{\Omega}  $ under this flow and that
 \[\re \lambda(T, \Omega) \subset  (\tilde{W_1}, \tilde{W_2})\]
In order to show that also the remainder of the boundary $ \partial \Omega $ is mapped to  $  \tilde{\Omega}  $, suppose  the contrary. Then there would be a point $z_0 \in \lambda(T,\Omega) $ with $\im z_0 < \tilde{\eta_1}$. Since $ \lambda(T,\Omega) $ is simply connected, there must also be a point $z \in \lambda(T,\Omega) \cap \tilde{\Omega}$ with $\im z = \tilde{\eta}_1 + \varepsilon$ for an arbitrarily small $\varepsilon > 0$. By~\eqref{eq:gammabdlower}, such a point would have $\im \gamma(T, z) \le \varepsilon$.

The second assertion follows from the first together with Lemma~\ref{thm:continuum}. Indeed, we have
\[\inf_{z\in \Omega} \im S_T(z) \geq \inf_{z \in \tilde{\Omega}} \im S_{T \wedge \tau_z} (\xi_T(z)) \geq \inf_{z \in \tilde{\Omega}} \im S_0(z) - \frac{C}{\sqrt{N\eta}},\]
where we used that $\tau_z \geq T$ for every $z \in \lambda(T, \Omega)$.
\end{proof}

In addition to the resolvent trace, we will need to consider the evolution of the diagonal Green function along a characteristic curve. The relevant technical results for this are contained in the proof of Theorem 3.1 of~\cite{nonergodic}, which employs the additional stopping time
\[
\tau(z) :=  \tau_z   \wedge \inf \left\{t > 0:  \int_0^{t \wedge  \tau_z } \! \frac{ds}{(\im \xi_s(z))^2}  \geq \frac{5}{K_l  \eta} \right\} . 
\]
\begin{proposition}[cf.~\cite{nonergodic}] \label{thm:gfuncmoment} Let $q \geq 2$. If $N$ is sufficiently large, then
\beq{eq:charbound}
\ee{\left|\im G_{t \wedge \tau(z)}(x, \xi_{t \wedge \tau(z)}(z)) \right|^q} \le \left(1 - Cq \sqrt{\frac{10}{K_l N \eta }}\right)^{-q} \left| \im G_0(x;z) \right|^q 
\eeq
for all  $z \in \Omega$.
\end{proposition}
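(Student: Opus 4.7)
The plan is to track the Itô evolution of $X_t := G_t(x, \xi_t(z))$ along the characteristic curve and extract the $q$-th moment bound via a self-bounding argument that combines the Burkholder--Davis--Gundy inequality with the stopping rule built into $\tau(z)$. This follows the strategy underlying the proof of Theorem 3.1 in \cite{nonergodic}.

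\textbf{Step 1: differential equation along the characteristic.} Under the DBM~\eqref{eq:goe}, the diagonal Green function $G_t(x,z)$ satisfies an SDE whose drift contains, among others, the term $S_t(z)\,\partial_z G_t(x,z)$ coming from the Itô correction for the GOE Brownian motion. With $\xi_t(z)$ evolving by~\eqref{eq:ODE}, this drift contribution is cancelled exactly when Itô's formula is applied to the composite process, leaving
\[
dX_t \; = \; b_t \, dt + d\mathcal{M}_t,
\]
where both the surviving drift $b_t$ and the quadratic-variation density $d\langle\mathcal{M}\rangle_t$ can be bounded pointwise by $C\,|\im X_t|^2 / \bigl(N\,(\im \xi_t(z))^2\bigr)$ after invoking the Ward identity $\sum_j |G_{xj}|^2 = \im G_{xx}/\im z$.

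\textbf{Step 2: moment bound via BDG and bootstrap.} Stopping at $\tau(z)$ and applying the Burkholder--Davis--Gundy inequality to the continuous martingale $\mathcal{M}_{t\wedge\tau(z)}$, together with the drift estimate from Step 1, yields a control of the form
\[
\bigl\| \im X_{t\wedge\tau(z)} \bigr\|_q \; \le \; \| \im X_0 \|_q \; + \; C\,q\, \Biggl(\int_0^{t\wedge\tau(z)} \frac{ds}{N\,(\im \xi_s(z))^2}\Biggr)^{\!\!1/2} \, \sup_{s \le t} \bigl\| \im X_{s\wedge\tau(z)} \bigr\|_q .
\]
The stopping time $\tau(z)$ was introduced precisely so that the time integral on the right is bounded by $5/(K_l N \eta)$. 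Rearranging the resulting inequality gives
\[
\sup_{s \le t} \bigl\| \im X_{s\wedge\tau(z)} \bigr\|_q \; \le \; \Biggl(1 - C\,q\,\sqrt{\frac{10}{K_l N \eta}}\Biggr)^{\!-1} \| \im X_0 \|_q ,
\]
and raising to the $q$-th power produces \eqref{eq:charbound}.

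\textbf{Main obstacle.} The technical heart of the argument is the Itô calculation of Step 1: one must show that, after the characteristic cancellation, both the remainder drift $b_t$ and the martingale $\mathcal{M}_t$ are controlled by the benign quantity $|\im X_t|^2 / (N (\im \xi_t)^2)$ and not by a more singular expression involving, for example, $1/(\im \xi_t)^3$, which would destroy the bootstrap. This is the reason for introducing the specific stopping rule $\int_0^{t\wedge\tau_z} (\im \xi_s)^{-2}\,ds \le 5/(K_l \eta)$ rather than a bound on $\im \xi_t$ alone.
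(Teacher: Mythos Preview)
Your proposal is correct and follows exactly the approach the paper invokes: the proposition is not reproved here but quoted from the proof of Theorem~3.1 in~\cite{nonergodic}, whose argument is precisely the It\^o calculation along characteristics, the Ward-identity bound on drift and quadratic variation, and the BDG--bootstrap closure you outline. The only point the paper adds is the observation that the BDG constant grows at most linearly, $C_q \le Cq$, which you have already built into your Step~2.

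One small technical slip worth fixing: in Step~2 the supremum over $s$ must sit \emph{inside} the $L^q$-norm, i.e.\ one bounds $\bigl\|\sup_{s\le t}|\im X_{s\wedge\tau(z)}|\bigr\|_q$ rather than $\sup_{s\le t}\|\im X_{s\wedge\tau(z)}\|_q$. This is because BDG controls the running supremum of the martingale, and the quadratic-variation estimate produces the pathwise supremum of $|\im X_s|$; the self-bounding inequality then closes for the quantity $\bigl\|\sup_s|\im X_{s\wedge\tau(z)}|\bigr\|_q$. With the supremum outside, the right-hand side does not majorize the left and the rearrangement would not be justified.
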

In~\cite{nonergodic}, the last bound was presented in the form
\[\ee{\left|\im G_{t \wedge \tau(z)}(x, \xi_{t \wedge \tau(z)}(z)) \right|^q} \le \left(1 - C_q \sqrt{\frac{10}{K_l N \eta  }}\right)^{-q} \left| \im G_0(x;z) \right|^q,\]
where the constant  $C_q$ is the constant for the Burkholder-Davis-Gundy inequality in the form
\[ \left\| \sup_{s \le t} |M_s|\right\|_q \le C_q \left\| \left[ M\right]_t^{1/2} \right\|_q.\] 
In the present paper, it will be important that this constant does not grow faster than $C_q \le Cq$ for $q \geq 2$  (see~\cite{MR1725357}), which we have already included in~\eqref{eq:charbound}. Finally, we note that Corollary~2.2 in~\cite{nonergodic} shows that the distinction between the stopping times $\tau_z$ and $\tau(z)$ is artificial on the event $ \mathcal{A} $. 
\begin{corollary}[cf.~\cite{nonergodic}]\label{thm:stoppingtimes}
In the setting of Theorem~\ref{thm:bndprop} we have $  \tau_z = \tau(z) $ for any $ z \in \Omega $.
\end{corollary}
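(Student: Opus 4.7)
The plan is to show that on the event $\mathcal{A}$ and under the hypotheses of Theorem~\ref{thm:bndprop}, the integral in the definition of $\tau(z)$ never reaches the threshold $5/(K_l \eta)$ before $\tau_z$; by construction this forces $\tau(z) = \tau_z$. So the goal reduces to the deterministic inequality
\[
\int_0^{\tau_z} \frac{ds}{(\im \xi_s(z))^2} < \frac{5}{K_l \eta}
\]
on this event, for every $z \in \Omega$.

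The first step I would take is a change of variables along the characteristic. Since $\dot{\xi}_s(z) = -S_s(\xi_s(z))$, the function $y(s) := \im \xi_s(z)$ is strictly decreasing on $[0, \tau_z]$ from $y(0) = \im z$ to $y(\tau_z) = \eta/2$, with speed $-\dot y(s) = \im S_s(\xi_s(z)) > 0$. Substituting $y$ for $s$ yields
\[
\int_0^{\tau_z} \frac{ds}{(\im \xi_s(z))^2} = \int_{\eta/2}^{\im z} \frac{dy}{y^2\, \im S_{s(y)}(\xi_{s(y)}(z))}.
\]
If I can establish $\im S_s(\xi_s(z)) \geq K_l/2$ uniformly for $s \in [0, \tau_z]$, then evaluating the elementary integral $\int dy/y^2$ gives the bound $\frac{2}{K_l}\left(\frac{2}{\eta} - \frac{1}{\im z}\right) \leq \frac{4}{K_l \eta} < \frac{5}{K_l \eta}$, exactly as required.

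The second step is to prove this uniform lower bound. On $\mathcal{A}$, Lemma~\ref{thm:continuum} gives $|S_s(\xi_s(z)) - S_0(z)| \leq C/\sqrt{N\eta}$ uniformly for $z \in \Omega$ and $s \leq \tau_z$, so the task reduces to showing $\im S_0(z) \geq K_l$ up to the asymptotically negligible additive error $C/\sqrt{N\eta}$. For $z \in \Omega \cap \tilde{\Omega}$ this is immediate from the first half of \eqref{eq:s0bound}. For $z \in \Omega$ with $\im z < \tilde{\eta}_1$, where the initial lower bound does not apply directly, I would invoke the first conclusion of Theorem~\ref{thm:bndprop} to pick a preimage $w \in \tilde{\Omega}$ with $\xi_T(w) = z$ and $\tau_w \geq T$, and then use $\mathcal{A}$ along the characteristic starting at $w$ (as in the proof of Theorem~\ref{thm:bndprop}.2) to propagate the bound $\im S_0(w) \geq K_l$ onto the trajectory that passes through $z$.

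The hardest part is this last transfer: the trajectories $\xi_{\cdot}(z)$ and $\xi_{\cdot}(w)$ are not identical even when $\xi_T(w) = z$, because the driving field $S_t$ is genuinely time-dependent, so some care is required to align the two and read off the lower bound on $\im S_s(\xi_s(z))$ from the bound on $\im S_{T+s}(\xi_{T+s}(w))$. Once this alignment is carried out in the spirit of the proof of Theorem~\ref{thm:bndprop}.2, the uniform lower bound $\im S_s(\xi_s(z)) \geq K_l/2$ holds throughout $[0, \tau_z]$ on $\mathcal{A}$, and the integral estimate displayed above completes the proof.
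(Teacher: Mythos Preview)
Your change-of-variables reduction and the treatment of the case $z\in\Omega\cap\tilde\Omega$ are correct and are exactly the argument behind Corollary~2.2 of \cite{nonergodic}, to which the paper simply defers without giving its own proof.

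The gap you flag for $z\in\Omega$ with $\im z<\tilde\eta_1$ is genuine, and your proposed fix does not close it. The preimage $w\in\tilde\Omega$ supplied by Theorem~\ref{thm:bndprop} satisfies $\xi_T(w)=z$, so on $\mathcal A$ one obtains $\im S_T(z)\ge K_l-C/\sqrt{N\eta}$; but the trajectory you must control is $\xi_s(z)$ starting at \emph{time $0$} from $z$, and for that trajectory $\mathcal A$ only pins $\im S_s(\xi_s(z))$ to $\im S_0(z)$, not to $\im S_0(w)$ or $\im S_T(z)$. Nothing in Theorem~\ref{thm:bndprop} converts a time-$T$ lower bound into a time-$0$ lower bound at the same spatial point, and in fact the hypotheses \eqref{eq:s0bound}--\eqref{eq:Tlower} alone do \emph{not} force $\im S_0(z)\ge cK_l$ below the strip $\tilde\Omega$: one can place a spectral gap of width $\sim\tilde\eta_1$ around some $E\in(W_1,W_2)$ in the initial data while keeping $\im S_0\ge K_l$ throughout $\tilde\Omega$, and then $\im S_0(E+i\eta)\lesssim\eta/\tilde\eta_1$, which makes $\int_0^{\tau_z}(\im\xi_s)^{-2}\,ds$ of order $\tilde\eta_1/\eta^2\gg 1/(K_l\eta)$. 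So the ``alignment'' you allude to cannot be carried out in general.

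What rescues the paper is that its only use of the corollary (in Section~\ref{sec:localresbounds}) is for points $w\in D_r'=\lambda(t_{r+1},D_{r+1})$, which by Theorem~\ref{thm:bndprop} already lie in the coarser domain where the initial lower bound $\im S_0\ge K_l$ is available. For such starting points your steps 1--4 constitute a complete proof. You should therefore either restrict the conclusion to $z$ with $\im S_0(z)\ge K_l$ (this covers every application here and matches the setting of \cite{nonergodic}), or note explicitly that the literal statement for arbitrary $z\in\Omega$ requires that additional hypothesis.
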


\section{Local Bounds for the Ultrametric Ensemble}\label{sec:locallaw}
In this section we prove some local bounds for  the Stieltjes transform
\[S_r(z) = \frac{1}{N_r} \tr (H_r - z)^{-1}\]
of the empirical eigenvalue measure of the ultrametric ensemble valid uniformly in a spectral domain of the form
\[\Omega_r = W_0 + i(\eta_r, 10).\] 
The admissible set of energies $W_0 \subset \rr$ is an open interval satisfying
\[W \subset W_0  \subset \tilde{W}, \qquad \operatorname{dist}(W_0, \partial \tilde{W}) > 0 \]
where $W$ is the bulk set and $\tilde{W}$ is the neighborhood of $W$ from Definition~\ref{def:bulk}. The spectral scale is given by
\[\eta_r = N_r^{-1 + \alpha}\]
with some $\alpha > 0$ to be fixed later in Section~\ref{sec:proof} for the proofs of our main results.

\begin{theorem}\label{thm:locallaw} 
There are constants $K_l, K_u \in (0, \infty)$ such that for any $p > 0$ and large enough $r $ we have
\[\pp\left(\im S_r(\Omega_r) \subset [K_l, K_u]  \; \mbox{and} \;  |S_r(\Omega_r) | \subset(K_l, K_u  \log N_r) \right) \geq 1 - N_r^{-p}.\]
\end{theorem}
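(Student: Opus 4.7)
The plan is to prove Theorem~\ref{thm:locallaw} by induction on $r$, exploiting the recursive construction $H_r = H_{r-1} \oplus H_{r-1}' + \Phi_{N_r}(t_r)$ from~\eqref{eq:dbm}. Conditioning on the independent pair $(H_{r-1}, H_{r-1}')$ turns the increment $\Phi_{N_r}(\cdot)$ into a Dyson Brownian motion with deterministic (diagonalizable) initial data $V = H_{r-1} \oplus H_{r-1}'$; the initial Stieltjes transform equals $S_0(z) = \tfrac{1}{2}(S_{r-1}(z) + S_{r-1}'(z))$ and the terminal one is $S_{t_r}(z) = S_r(z)$. The propagation results of Section~\ref{sec:dbm}, applied with $N = N_r$ and $T = t_r$, should therefore convert bounds on $S_{r-1}$ at the coarser scale $\eta_{r-1} = 2\eta_r$ into bounds on $S_r$ at the target scale $\eta_r$.

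For the inductive step, assume on a slightly enlarged domain $\tilde{\Omega}_{r-1} \supset \Omega_r$ the bounds $\im S_{r-1} \in [K_l^{(r-1)}, K_u^{(r-1)}]$ and $|S_{r-1}| \leq K_u^{(r-1)} \log N_{r-1}$. Fixing $\alpha \in (0,|\epsilon|)$ in $\eta_r = N_r^{-1+\alpha}$, the compatibility conditions~\eqref{eq:Tlower} of Theorem~\ref{thm:bndprop} hold for large $r$: since $\epsilon \in (-1,-1/2)$, $t_r K_l^{(r-1)} \sim N_r^{-1+|\epsilon|}$ comfortably dominates $\eta_{r-1} \sim N_r^{-1+\alpha}$, and $t_r \log N_r \to 0$ is easily smaller than the fixed geometric gap $\operatorname{dist}(W_0, \partial \tilde{W}) > 0$. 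On the event $\mathcal{A}_r$ from Lemma~\ref{thm:continuum}, whose definition~\eqref{eq:aevent} controls the full complex difference $S_{t_r}(\xi_{t_r}(z)) - S_0(z)$, combining this with the first conclusion of Theorem~\ref{thm:bndprop} (every $z \in \Omega_r$ lies on a characteristic originating in $\tilde{\Omega}_{r-1}$) yields
\[ \im S_r(z) \geq K_l^{(r-1)} - \tfrac{C}{\sqrt{N_r\eta_r}}, \quad \im S_r(z) \leq K_u^{(r-1)} + \tfrac{C}{\sqrt{N_r\eta_r}}, \quad |S_r(z)| \leq K_u^{(r-1)} \log N_{r-1} + \tfrac{C}{\sqrt{N_r\eta_r}} \]
uniformly on $\Omega_r$. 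Because $1/\sqrt{N_r\eta_r} = N_r^{-\alpha/2}$ is summable in $r$ and the $\log 2$ slack between $\log N_r$ and $\log N_{r-1}$ absorbs the step loss in the logarithmic bound, initializing $K_l^{(r_0)}, K_u^{(r_0)}$ with sufficient headroom maintains uniform constants $K_l, K_u$ throughout. A union bound over the failure probabilities $C\exp(-cN_s^\alpha)$ for $s \leq r$ preserves the overall $1 - N_r^{-p}$.

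The principal obstacle is the base case at $r = r_0$, where all three bounds must be initialized on the mesoscopic scale $\eta_{r_0}$ without a preceding iterate. The density-of-states lower bound from Definition~\ref{def:bulk} combined with the Poisson-kernel representation $\ee[\im S_{r_0}(E+i\eta)] = \pi (\rho_{r_0} \ast P_\eta)(E)$ yields a uniform lower bound on $\ee \im S_{r_0}$ for $E$ near $W_0$, while the Wegner estimate gives uniform $L^\infty$ control on $\rho_{r_0}$, with the logarithmic bound on $|\ee S_{r_0}|$ emerging from a Hilbert-transform argument in the spirit of~\eqref{eq:upperbd2}. Converting these expectation-level statements into high-probability bounds at scale $\eta_{r_0}$ is the delicate point: direct Gaussian Lipschitz concentration of $S_{r_0}$ in the entries of $H_{r_0}$ only succeeds down to scales $\eta_\ast \gg N_{r_0}^{-1/2}$. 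One therefore initializes at such a coarser scale $\eta_\ast = N_{r_0}^{-1/2+\delta}$ and then descends to $\eta_{r_0}$ through several further applications of Theorem~\ref{thm:bndprop}; the assumption $\epsilon < -1/2$ enters precisely here, since it ensures that the cumulative DBM time $\sum_{s > r_0} t_s \sim N_{r_0}^{-1+|\epsilon|}$ suffices to bridge the scale gap down from $\eta_\ast$.
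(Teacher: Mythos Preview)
Your inductive scheme captures the right heuristic for the lower bound, but the paper takes a more direct route that avoids induction altogether and sidesteps a circularity in your upper-bound propagation.

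For the \emph{upper} bounds on $\im S_r$ and $|S_r|$, the paper does not use the DBM machinery at all. It splits $H_r = A + V$ into off-diagonal and diagonal parts, conditions on $A$, and applies Theorem~\ref{thm:schroedingerlaws2} directly to the resulting random Schr\"odinger operator; this yields the upper bounds for every $z \in \Omega_r$ and every $r$ in one stroke. Your plan to propagate the upper bounds through the induction runs into a concrete obstacle: hypothesis~\eqref{eq:s0bound} of Theorem~\ref{thm:bndprop} requires $\sup_{z \in \Omega}|S_0(z)| \le K_u \log N$ on the \emph{fine} domain $\Omega = \Omega_r$, but your inductive hypothesis only bounds $S_{r-1}$ on the coarser $\Omega_{r-1}$, which does not contain $\Omega_r$ since $\eta_r < \eta_{r-1}$. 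Without this input the proof of Theorem~\ref{thm:bndprop} cannot control the displacement~\eqref{eq:gammabdupper} of the characteristics, so the surjectivity statement you rely on is unavailable. The fix is precisely what the paper does: obtain the upper bound on $|S_0|$ at the fine scale directly from Theorem~\ref{thm:schroedingerlaws2}, not from the preceding iterate.

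For the \emph{lower} bound, the paper again avoids induction. At each fixed $r$ it establishes concentration of $S_{r-1}$ around its mean at a single coarse scale $\tilde\eta_r = N_r^{-1+\tilde\alpha}$ with $\tilde\alpha \in (1/2, |\epsilon|)$ (Lemma~\ref{thm:coarselocallaw}), combining Theorem~\ref{thm:schroedingerlaws} for the diagonal disorder with Gaussian Lipschitz concentration for the off-diagonal entries. The bulk assumption~\eqref{eq:dosassumption} then gives $\im S_0 \ge K_l$ on $\tilde\Omega_r$, and one application of Theorem~\ref{thm:bndprop} with $T = t_r$ pushes this down to $\Omega_r$ in a single step; the restriction $\epsilon < -1/2$ enters exactly through the requirement $1/2 < \tilde\alpha < |\epsilon|$. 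Your inductive bootstrap for the lower bound would also work once the upper-bound input to~\eqref{eq:s0bound} is supplied externally, but it is more elaborate than needed, and your description of the base case conflates the one-shot initialization at scale $\eta_\ast$ with the subsequent DBM steps (the quantity $\sum_{s>r_0} t_s$ you invoke belongs to the induction, not the base case).
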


The upper bounds in the theorem can be established directly for all $z \in \Omega_r$ by regularizing the spectrum with only the diagonal randomness.

\begin{proof}[Proof of the upper bounds]
We will split
\begin{equation}\label{eq:split}
H_{r} = A + V
\end{equation}
into a sum of its diagonal entries $V$ and off-diagonal entries~$A$. Conditioning on the values of~$ A $, 
the upper bounds on $ \im S_r $ and $ |S_r | $  are immediate consequences of Theorem~\ref{thm:schroedingerlaws2}. 
\end{proof}

The proof of the lower bounds will proceed in two steps. The first step establishes some concentration of measure for $S_r(z)$ when the spectral parameter lies in a coarse spectral domain
\[\tilde{\Omega}_r = \tilde{W} + i(\tilde{\eta_r}, 10 + 2K_ut_r\log(N_r)),\]
where $ \tilde{\eta_r} > \eta_r $ and $\tilde{W}$  is the neighborhood of the bulk set $W$. Since $W$ is a bulk set,
\begin{equation}\label{eq:dospos} 
\inf_{z \in \tilde{\Omega}_r}  \ee\left[ \im S_r(z) \right] \geq  K_l
\end{equation}
for an appropriately chosen $K_l > 0$, which then implies a lower bound for $\im S_r$ uniformly in $\tilde{\Omega}_r$ with high probability. In the second step, this lower bound is propagated to the finer spectral domain $\Omega_r$ by applying the results of Section \ref{sec:dbm} to the Dyson Brownian motion in a single step of the iterative construction~\eqref{eq:dbm}.

Since we are considering the regime $\epsilon < -1/2$, it is possible to choose the coarse spectral scale
\[\tilde{\eta}_r = N_r^{-1 + \tilde{\alpha}}\]
such that
\beq{alphatilde} 1/2 < \tilde{\alpha} < -\epsilon.
\eeq
The lower bound $1/2 < \tilde{\alpha}$ enables the first step via the following lemma, whose proof is a combination of Theorem~\ref{thm:schroedingerlaws} and the strategy in~\cite{MR1781846}.

\begin{lemma}\label{thm:coarselocallaw} We have
\begin{equation} \label{eq:locall} 
 \sup_{z \in  \tilde{\Omega}_r} |S_r(z) - \mathbb{E}\left[S_r(z)\right]| \prec N_r^{1/2 - \tilde{\alpha}}.
\end{equation}
\end{lemma}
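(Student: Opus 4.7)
The plan is to combine Theorem~\ref{thm:schroedingerlaws} with a net argument in $z$ to control $S_r(z) - \ee S_r(z)$ uniformly on $\tilde{\Omega}_r$.

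Write $H_r = A + V$, with $A$ containing the off-diagonal entries of $H_r$ and $V = \operatorname{diag}(V_1, \ldots, V_{N_r})$ the diagonal ones. Since $\epsilon > -1$ the series $\sum_{s \ge 0} 2 t_s/2^s$ converges, so each $V_x$ is a centered Gaussian of $\oh(1)$ variance; moreover the $V_x$ are i.i.d.\ and independent of $A$. Applied conditionally on $A$, Theorem~\ref{thm:schroedingerlaws} therefore yields, for any fixed $z \in \tilde{\Omega}_r$ and any $\theta > 0$,
\[
\pp\!\left(|S_r(z) - \conee{S_r(z)}{A}| > N_r^{1/2 - \tilde{\alpha} + \theta/2}\,\Big|\,A\right) \le C\exp\!\left(-c N_r^{\theta}\right),
\]
and integrating out $A$ preserves this bound. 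To pass from the conditional to the unconditional expectation I would view $A \mapsto \conee{S_r(z)}{A}$ as a function of the Gaussian off-diagonal entries of $H_r$; the strategy of~\cite{MR1781846} then supplies a companion concentration estimate yielding $|\conee{S_r(z)}{A} - \ee S_r(z)| \prec N_r^{1/2 - \tilde{\alpha}}$, after which the triangle inequality gives the pointwise form of the lemma.

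To upgrade to uniformity in $z$, fix a grid $\Lambda \subset \tilde{\Omega}_r$ of cardinality $|\Lambda| \le N_r^K$ and spacing at most $\tilde{\eta_r}^2 N_r^{-K}$ for some large $K$. A union bound over $\Lambda$ combined with the deterministic Lipschitz bound $|\partial_z S_r(z)| \le \tilde{\eta_r}^{-2}$ extends the pointwise estimate from $\Lambda$ to all of $\tilde{\Omega}_r$, the discretization error being absorbed into the $N_r^\theta$ slack inherent in the notation~\eqref{eq:notation}.

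I expect the main obstacle to be the removal of the conditioning. By Jensen's inequality and the rank-two interlacing used in the proof of Theorem~\ref{thm:schroedingerlaws}, the map $A \mapsto \conee{S_r(z)}{A}$ has bounded differences of order $1/(N_r \tilde{\eta_r})$ per off-diagonal entry; however a direct McDiarmid argument across the $\oh(N_r^2)$ off-diagonal entries only produces a variance of order $\tilde{\eta_r}^{-2}$, weaker by a factor of $N_r$ than what is required. Beating this requires a finer estimate that exploits the Gaussianity of the entries of $A$ and the smoothing effect of integrating over $V$, which is the substantive content borrowed from~\cite{MR1781846}.
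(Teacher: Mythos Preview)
Your outline matches the paper's proof almost exactly: the same splitting $H_r = A + V$, the same use of Theorem~\ref{thm:schroedingerlaws} for the diagonal fluctuations, the same grid argument for uniformity, and the same recognition that the off-diagonal step requires Gaussian concentration together with the smoothing induced by averaging over $V$. So the approach is right.

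The one genuine gap is that you assert the off-diagonal concentration rather than prove it: you correctly diagnose that a naive bounded-differences argument loses a factor $N_r$, but you stop at ``this is the substantive content borrowed from~\cite{MR1781846}''. What the paper actually does here is short and worth writing out. One computes the gradient of $A \mapsto \eesub{V} S_r(z)$ with respect to the standardized Gaussian entries $\tilde{A}_{xy}$ (where $A_{xy} = \Sigma_{xy}\tilde{A}_{xy}$) and bounds each partial derivative by
\[
\left|\parder{\eesub{V} S_r(z)}{\tilde{A}_{xy}}\right|^2 \le C\left(\frac{\Sigma_{xy}}{N_r\,\im z}\right)^2,
\]
using Cauchy--Schwarz and then the spectral averaging principle applied to the $V$-expectation to bound $\eesub{V}\im G_r(x,x;z)$ uniformly. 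The crucial structural input is that the variance profile of the ultrametric ensemble satisfies $\max_x \sum_y \Sigma_{xy}^2 \le C$ uniformly in $r$, so the squared $\ell^2$-norm of the gradient is at most $C/(N_r(\im z)^2)$. Gaussian Lipschitz concentration then gives exactly the companion estimate you claim. Without this computation the proof is incomplete, since neither McDiarmid nor a black-box citation supplies the correct variance scale.
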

\begin{proof}
We again start by splitting $H_r$ into diagonal entries $V$ and off-diagonal entries~$A$ as in ~\eqref{eq:split}. For $z \in \tilde{\Omega}_r$, Theorem~\ref{thm:schroedingerlaws} shows that the conditional expectation
\[\eesub{V} S_r(z) = \conee{S_r(z)}{A}\]
satisfies
\beq{eq:shatbound} \big| S_r(z) - \eesub{V} S_r(z)\big| \prec  N_r^{1/2 - \tilde{\alpha}}.
\eeq
To show that $\eesub{V} S_r(z)$ is close to $\ee S_r(z)$, we note that the matrix $A$ can be written as an entrywise product $A_{xy} =: \Sigma_{xy}  \tilde{A}_{xy} $ of a symmetric matrix $\tilde{A}$ with independent $\mathcal{N}(0,1)$ entries and a profile $\Sigma$ satisfying
\[ \max_x \sum_y  \left|\Sigma_{xy}\right|^2 \le C\]
uniformly in $r \geq 0$. The partial derivatives of $\eesub{V}S_r(z)$ with respect to off-diagonal $\tilde{A}_{xy}$ may easily be calculated as
\[\parder{\eesub{V} S_r(z)}{\tilde{A}_{xy}} = - \frac{\Sigma_{xy}}{N_r} \, \eesub{V} \parder{}{z} \left[ \langle \delta_y, (H_{r}-z)^{-1} \delta_x \rangle +  \langle \delta_x, (H_{r}-z)^{-1} \delta_y \rangle \right].\]
Hence,
\begin{align*} \left| \parder{\eesub{V} S_r(z)}{\tilde{A}_{xy}} \right|^2 &\le \left(\frac{2 \Sigma_{xy}}{N_r} \, \eesub{V} \langle \delta_y, |H_{r}-z|^{-2} \delta_x \rangle \right)^2 \\
&\le \left(\frac{2 \Sigma_{xy}}{N_r}\right)^2  \eesub{V} \langle \delta_x, |H_{r}-z|^{-2} \delta_x \rangle \, \eesub{V} \langle \delta_y, |H_{r}-z|^{-2} \delta_y \rangle \\
&\le \left(\frac{2 \Sigma_{xy}}{N_r \, \im z}\right)^2  \eesub{V} \im G_r(x, x; z) \, \eesub{V}  \im G_r(y, y; z)  \le C \left(\frac{ \Sigma_{xy}}{N_r \, \im z}\right)^2,
\end{align*}
where we used the Cauchy-Schwarz inequality with respect to the joint probability and spectral measures and then applied the spectral averaging principle. We conclude that
\[ \Bigg(\sum_{x<y}  \Big| \parder{\eesub{V} S_r(z)}{\tilde{A}_{xy}} \Big|^2 \Bigg)^{1/2} \le \frac{C}{\sqrt{N_r (\im z)^2}}\]
so for $z \in \tilde{\Omega}_r$ we obtain
\[\pp\left( \left|\eesub{V} S_r(z) - \ee S_r(z)\right| > \frac{N_r^\mu}{\sqrt{N_r (\im z)^2} } \right) \le  C \exp\left(- c N_r^{2\mu} \right)\]
by the concentration inequality for Lipschitz-continuous functions of Gaussian random variables (see~\cite{MR2906465}). So we have proved that
\[ \big|\eesub{V} S_r(z) - \ee S_r(z)\big| \prec N_r^{1/2 - \tilde{\alpha}}.\]
and combining this with~\eqref{eq:shatbound} yields
\[\left| S_r(z) - \ee S_r(z)\right| \prec N_r^{1/2 - \tilde{\alpha}} \]
for every $ z \in \tilde{\Omega}_r $. Since $S_r$ is $\tilde{\eta}_r^{-2}$-Lipschitz continuous in $\tilde{\Omega}_r$, choosing a sufficiently fine grid of $z \in \tilde{\Omega}_r$ and applying the union bound proves~\eqref{eq:locall}.
\end{proof}

Turning to the second step, we recall~\eqref{eq:dbm} and write
\[H_{r} = H_{r-1} \oplus H_{r-1}^\prime +  \Phi_{N_{r}}(t_{r}) \]
where $H_{r-1}^\prime$ is an independent copy of $H_{r-1}$ and $\Phi_{N_{r}}(t_{r}) $ a $ N_{r}\times N_{r} $ Dyson Brownian motion at time $ t_{r} $.

\begin{proof}[Proof of the lower bound]
After slightly adjusting $K_l$ by a small error, the lower bound implied by Lemma~\ref{thm:coarselocallaw} immediately yields that
\[\frac{1}{N_r} \im \tr \left( H_{r-1} \oplus H_{r-1}^\prime - z \right)^{-1} \geq K_l\]
for all $z \in \tilde{\Omega}_r$ with probability $1 - N_r^{-p}$. We have $t_r \eta_r \to \infty$ and $t_r \log (N_r) \to 0$ as $r \to \infty$ by~\eqref{alphatilde}. Since also
\[\operatorname{dist}(W_0, \partial \tilde{W}) > 0\]
the domains $\Omega_r, \tilde{\Omega}_r$ and the time $t_r$ satisfy the hypothesis~\eqref{eq:s0bound} and~\eqref{eq:Tlower} of Theorem~\ref{thm:bndprop} for sufficiently large $r$. After adjusting $K_l$ again, we conclude that
\[\mathbb{P}\left( \inf_{z\in \Omega_{r}}  \im S_{r}(  z )\geq  K_l   \right) \geq 1 - N_{r}^{-p} \]
for any $p>0$ when $r$ is large enough.
\end{proof}

\section{Local Green Function Estimates}\label{sec:localresbounds}
In this section we will prove the following stochastic domination bound for the local resolvents, which provides the main technical tool in the proofs of Theorems~\ref{thm:eigenfunctions} and~\ref{thm:measures}.
\begin{theorem}\label{thm:locresbound} Let $x \in \{1, \dots, N_n\}$. Then
\[\sup_{z \in W + i(\eta_n, 2)} \im G_n(x,  z) \prec 1.\]
\end{theorem}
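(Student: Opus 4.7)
The plan is to prove the bound by induction on $n$, leveraging the iterative construction~\eqref{eq:dbm} and transporting control of $G_{n-1}$ across one step of Dyson Brownian motion using the tools of Section~\ref{sec:dbm}. Because each application of Theorem~\ref{thm:bndprop} widens the real-part window by $O(t_n \log N_n)$ per~\eqref{eq:Tlower}, I would really induct on the slightly stronger statement
\[
\sup_{z \in W_0^{(n)} + i(\eta_n, 2)} \im G_n(x, z) \prec 1,
\]
where $W \subset \bigcap_n W_0^{(n)} \subset W_0^{(n)} \subset \tilde{W}$ is a nested sequence of intervals chosen so that each coarse domain $\tilde{\Omega}_n$ sits inside $\Omega_{n-1}$. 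This is possible since $\sum_n t_n \log N_n < \infty$ for $\epsilon > -1$. The base case holds for some fixed $n_0$, since $H_{n_0}$ is a random matrix of bounded size and $\eta_{n_0}$ is bounded below.

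For the inductive step, I would write $H_n = V + \Phi_{N_n}(t_n)$ with initial condition $V = H_{n-1} \oplus H_{n-1}^\prime$. Theorem~\ref{thm:locallaw} applied at level $n-1$, combined with the bulk-set hypothesis~\eqref{eq:dosassumption} and Lemma~\ref{thm:coarselocallaw}, furnishes the initial bounds~\eqref{eq:s0bound} on $\tilde{\Omega}_n$, while~\eqref{eq:Tlower} holds for large $n$ by the choice $\tilde{\alpha} > 1/2$ from~\eqref{alphatilde} together with the summability above. On the event $\mathcal{A}$ of~\eqref{eq:aevent}, Theorem~\ref{thm:bndprop} produces, for each $z \in \Omega_n$, a preimage $w(z) := \lambda(t_n, z) \in \tilde{\Omega}_n$ under the characteristic flow satisfying $\xi_{t_n}(w(z)) = z$ and $\tau(w(z)) \geq t_n$ (the latter via Corollary~\ref{thm:stoppingtimes}). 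Conditioning on $V$ and invoking Proposition~\ref{thm:gfuncmoment} with fixed $q \geq 2$ then yields
\[
\eesub{\Phi} \bigl[ \bigl|\im G_n(x, z)\bigr|^q 1_{\mathcal{A}} \bigr] \leq (1 + o(1)) \, \bigl|\im G_0(x, w(z))\bigr|^q,
\]
where by block structure $G_0(x, w) = G_{n-1}(x', w)$ for the appropriate copy $x'$ of $x$.

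The key trick will be to decouple the DBM randomness in $w(z)$ from the $V$-randomness in $G_0$: since $w(z) \in \tilde{\Omega}_n$ on $\mathcal{A}$, I would majorize
\[
\bigl|\im G_0(x, w(z))\bigr|^q \leq \sup_{w \in \tilde{\Omega}_n} \bigl|\im G_{n-1}(x', w)\bigr|^q,
\]
which is purely $V$-measurable. Taking an additional expectation over $V$ and applying the induction hypothesis uniformly on $\tilde{\Omega}_n$---valid since $\tilde{\eta}_n \gg \eta_{n-1}$ for large $n$ once $\alpha < \tilde{\alpha}$ is chosen small enough---gives a pointwise moment bound that is $\prec 1$ in the sense~\eqref{eq:notation}. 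A polynomial-density grid over $\Omega_n$, the Lipschitz continuity of $G_n(x, \cdot)$ at scale $\eta_n^{-2}$, a union bound, and Markov's inequality with $q$ large upgrade this to the uniform statement $\sup_{z \in \Omega_n} \im G_n(x, z) \prec 1$; the complementary event $\mathcal{A}^c$ contributes negligibly via the trivial estimate $\im G_n \leq \eta_n^{-1}$ and the exponentially small $\pp(\mathcal{A}^c)$ from~\eqref{eq:uniformchar}.

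The principal obstacle is precisely the entanglement between the two layers of randomness: Proposition~\ref{thm:gfuncmoment} is tailored to propagating from a fixed initial spectral parameter, whereas in our setting the relevant parameter $w(z)$ is itself a random function of the DBM. Majorizing by the $V$-measurable supremum over $\tilde{\Omega}_n$ resolves this cleanly, at the cost of the nested-domain bookkeeping to guarantee both $\tilde{\eta}_n > \eta_{n-1}$ and the summable real-part widening across all levels.
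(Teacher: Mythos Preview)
Your induction has a genuine gap at the displayed moment inequality
\[
\eesub{\Phi}\bigl[|\im G_n(x,z)|^q\,1_{\mathcal{A}}\bigr]\le(1+o(1))\,|\im G_0(x,w(z))|^q.
\]
The right-hand side still depends on $\Phi$ through the random preimage $w(z)=\lambda(t_n,z)$, so the inequality is ill-posed: the left side is $V$-measurable after the conditional expectation, the right side is not. Proposition~\ref{thm:gfuncmoment} controls $\ee|\im\gtil_t(w)|^q$ only for a \emph{deterministic} starting point $w$; it gives nothing when $w$ is correlated with the Brownian motion. Your ``key trick'' of majorizing $|\im G_0(x,w(z))|$ by the $V$-measurable supremum over $\tilde\Omega_n$ is performed \emph{after} this step and therefore does not repair it.

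The correct order---and this is what the paper does---is to first bound $\im G_n(x,z)\,1_{\mathcal A}\le\sup_{w\in D'}|\im\gtil_{t_n}(w)|$ on a source domain $D'$, discretize to a grid $\Lambda$ via the Lipschitz estimate~\eqref{eq:Gronwall}, and only then apply Proposition~\ref{thm:gfuncmoment} at each fixed grid point. This unavoidably costs a factor $|\Lambda|^{1/q}$ per step. With a fixed base $n_0$ and fixed $q$ the accumulated product $\prod_{r=n_0}^{n-1}|\Lambda_r|^{1/q}$ is of order $N_n^{cn/q}$ and diverges; letting $q\to\infty$ with $n$ does not save a fixed-$n_0$ scheme either, because the constants $C(q,r)=(1-Cq/\sqrt{N_r\eta_r})^{-1}$ in~\eqref{eq:charbound} become singular at the early levels $r\approx n_0$ once $q\gtrsim N_{n_0}^{\alpha/2}$. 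The paper's resolution is to drop the genuine induction and iterate only over a window $m\le r\le n$ with $m=(1-\delta)n$, seeding with the trivial bound $X(m,p)\le\eta_m^{-1}\le N_n^{1-\delta}$; choosing $1-\delta<\theta/4$ and $q\asymp(n-m)$ makes both the seed and the accumulated grid factor at most $N_n^{\theta/2}$. This moving starting point is the essential idea missing from your proposal.
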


Since the distribution of $ G_r(x,  z) $ does not depend on $ x $, we restrict our attention to 
\[G_r(z) = G_r(1, z)\]
without loss of generality. The proof again employs the observation~\eqref{eq:dbm}
\[H_{r+1} = H_{r} \oplus H_{r}^\prime +  \Phi_{N_{r+1}}(t_{r+1})\]
and controls
\[G_{r+1}(t, z) = \langle \delta_1, \left(H_{r} \oplus H_{r}^\prime + \Phi_{N_{r+1}}(t) - z\right)^{-1} \delta_1 \rangle\]
using the results of Section~\ref{sec:dbm}.

\begin{proof}[Proof of Theorem~\ref{thm:locresbound}]
Let $m = (1-\delta)n$ with $\delta \in (0,1)$ to be specified later. Suppose that $W$ takes the form $W = [W_1, W_2]$ so that, for large enough $n$, the slightly fattened spectral domain
\[D_r = \left(W_1 - b, W_2 + b\right) + i\left(\eta_r, 2 + b \right)\]
with
\[b = CK_u \log(N_r) \sum_{j=m}^r t_j\]
is still contained in the spectral domain $\Omega_r$ of Theorem~\ref{thm:locallaw} for all $m \le r \le n$. The proof will proceed by recursively controlling the moments
\[X(r, p) = \left\| \sup_{z \in D_r} \im G_r(z)\right\|_p = \left( \ee \left|  \sup_{z \in D_r} \im G_r(z)\right|^p \right)^{1/p}.\]
We begin by restricting the expectation in the defintion of $X(r+1,p)$ to the event $\mathcal{A}_{r+1}$ of~\eqref{eq:aevent} for the Dyson Brownian motion $\Phi_{N_{r+1}}(t)$. Combined with the trival bound on $G_{r+1}$, this yields
\[X(r+1, p) \le \left\| \sup_{z \in D_{r+1}} \im G_{r+1}(z)1_{\mathcal{A}_{r+1}} \right\|_p  + \eta_{r+1}^{-1} \pp\left(\mathcal{A}_{r+1}^c \right)^{1/p}.\]
Let
\[\gtil_{t_{r+1}}(w) = G_{r+1}(t_{r+1} \wedge \tau(w), \xi_{t_{r+1} \wedge \tau(w)}(w)).\]
denote the evolution of $G_t$ along the stopped characteristic featured in Proposition~\ref{thm:gfuncmoment}. On the event $\mathcal{A}_{r+1}$,  Theorem~\ref{thm:bndprop} shows that there exists some $D_r^\prime \subset D_r$ such that $D_{r+1} \subset \xi_{t_r}(D_r^\prime)$ and $\tau_w \geq t_{r+1}$ for $w \in D_r^\prime$. Combining this with Corollary~\ref{thm:stoppingtimes}, which lets us replace the stopping time $\tau(w)$ by $\tau_w$, we obtain
\[\left\|\sup_{z \in D_{r+1}} \im G_{r+1}(z) 1_{\mathcal{A}_{r+1}}\right\|_p \le \left\| \sup_{w \in D_r^\prime} \im \gtil_{t_{r+1}}(w)\right\|_p.\] 
Let $\Lambda_r \subset D_r^\prime$ be a finite grid such that $|\Lambda_r| \le \eta_{r}^{-8}$ and $\operatorname{dist}(z, \Lambda_r) \le C\eta_{r+1}^{4}$ for all $z \in D_r^\prime$. For $w \in D_r^\prime$, Proposition~\ref{prop:char} shows that $w \to \xi_{t_{r+1}}(w) = \gamma(t_{r+1}, w)$ has Lipschitz constant $C \eta_{r+1}^{-1}$ so the function $w \to  \im \gtil_{t_{r+1}}(w)$ has Lipschitz constant $C\eta_{r+1}^{-3}$.  Our choice of $\Lambda_r$ therefore guarantees that
\beq{eq:firstxineq} X(r+1, p) \le  \left\| \max_{w \in \Lambda_r} \im \gtil_{t_{r+1}}(w)\right\|_p +  C\eta_{r+1} + \eta_{r+1}^{-1} \pp\left(\mathcal{A}_{r+1}^c \right)^{1/p}.
\eeq

We now take some $q> p$, apply Jensen's inequality to the conditional expectation with respect to $H_r, H_r^\prime$, and replace the maximum by a sum:
\begin{align*}\left\| \max_{w \in \Lambda_r} \im \gtil_{t_{r+1}}(w)\right\|_p &\le \ee \left[\conee{\max_{w \in \Lambda_r} \left|\im \gtil_t(w) \right|^q}{H_r, H_r^\prime}^{p/q}\right]^{1/p}\\
&\le \ee \left[  \left( \sum_{w \in \Lambda_r} \conee{ \left|\im \gtil_t(w) \right|^q}{H_r, H_r^\prime}\right)^{p/q}\right]^{1/p}.
\end{align*}
Applying Proposition~\ref{thm:gfuncmoment} to the conditional expectation, we obtain 
\begin{align} \label{eq:secondxineq} \left\| \max_{w \in \Lambda_r} \im \gtil_{t_{r+1}}(w)\right\|_p &\le C(q, r) \ee \left[  \left( \sum_{w \in \Lambda_r} \left|\im G_r(w) \right|^q\right)^{p/q}\right]^{1/p} \nonumber\\
&\le C(q,r) \left|\Lambda_r\right|^{1/q}   X(r, p),
\end{align}
where
\[C(q, r) = \left(1 - C q \sqrt{\frac{10}{K_l N_r \eta_r}}\right)^{-1}.\]
Putting~\eqref{eq:firstxineq} and~\eqref{eq:secondxineq} together proves that
\[X(r+1,p) \le C(q,r) \left|\Lambda_r\right|^{1/q} X(r,p) + \eta_{r+1}^{-1} \pp\left(\mathcal{A}_{r+1}^c \right)^{1/p} +  C\eta_{r+1}.\]
Provided that $p$ is fixed and $q$ grows only polynomially in $n$ the terms
\[\prod_{r=m}^{n-1} C(q,r) \qquad \mbox{and} \qquad \sum_{r=m}^{n-1} \left( \eta_{r+1}^{-1} \pp\left(\mathcal{A}_{r+1}^c \right)^{1/p} +  C\eta_{r+1}\right)\]
remain bounded uniformly in $n$ by Lemma~\ref{thm:continuum}. Therefore, the previous bound may be iterated to obtain
\begin{align*}X(n, p) &\le  \left|\Lambda_n\right|^{(n-m)/q} \prod_{r=m+1}^{n-1} C(q,r) X(m, p) + C\sum_{r=m+1}^{n-1} \left( \eta_{r+1}^{-1} \pp\left(\mathcal{A}_{r+1}^c \right)^{1/p} +  C\eta_{r+1}\right)\\
&\le C \eta_n^{-8(n-m)/q} (1 + X(m,p)).
\end{align*}

Given $\theta > 0$ and $\tilde{p} < \infty$, we have to show that
\[\pp\left(\sup_{z \in D_n} \im G(z) > N_n^\theta \right) \le N_n^{-\tilde{p}}\]
for sufficiently large $n$. We choose $\delta \in (0,1)$ such that $(1 - \delta) < \theta/4$ and let $q = 32 \, \theta^{-1} (n-m)$. 
Then for any $p > 0$ and $n$ large enough we have
\[X(n, p) \le C \eta_n^{-\theta/4}(1 + X(m, p)) \le C \eta_n^{-\theta/4} \eta_m^{-1}  \le N_n^{\theta/2}.\]
Taking $p = 2\theta^{-1}\tilde{p}$ and using Markov's inequality finishes the proof.
\end{proof}

\section{Proofs of the Main Results}\label{sec:proof}
We will now turn to translating the resolvent bounds in Theorems~\ref{thm:locallaw} and~\ref{thm:locresbound} into the spectral statements in the main results of this paper.

\begin{proof}[Proof of Theorem~\ref{thm:eigenfunctions}] The bounds for the eigenfunctions follow from Theorem~\ref{thm:locresbound} and the inequality
\[|\psi_E(x)|^2 \le \sum_{\lambda \in \sigma(H_n)} \frac{(2\eta_n)^2}{(E-\lambda)^2 + (2\eta_n)^2} |\psi_\lambda(x)|^2 = 2\eta_n \, \im G_n(x, x; E + i2\eta_n).\]
Given $\theta, p > 0$, we choose $\alpha < \theta$ so that
\[\pp\left( \sup_{E \in W} |\psi_E(x)|^2 > N_n^{\theta - 1} \right) \le \pp\left(\sup_{z \in W + i(\eta_n, 2)}  \im G_n(x,x;z) > N_n^{\theta - \alpha} \right) \le N_n^{-p} \]
for large enough $n$. The assertion of Theorem~\ref{thm:eigenfunctions} now follows from the union bound.
\end{proof}

We now turn to the infinite-volume operator~\eqref{eq:defHinfty}
and a proof of Theorem~\ref{thm:measures}. For each $ r $  the operators in~\eqref{eq:defHinfty} are direct sums
\[ \Phi_{\infty,r} = \bigoplus_{k=1}^\infty  \Phi_r^{(k)} \]
of independent $ N_r \times N_r $ Gaussian Orthogonal Ensembles $  \Phi_r^{(k)} $. The expected norm of the first block $ \ee\|\Phi_r^{(1)}\|$ is bounded uniformly in $r \geq 0$  (\cite[Chap.~2.3]{MR2906465}) so that
\[ \sum_{r} t_r^{\beta/2}  \ee\|\Phi_r^{(1)}\| < \infty\]
for any $\beta > 0$. Hence, there is  $C_\beta(\omega) $ with $ \ee C_\beta < \infty $ such that
for all $r $
\beq{eq:normtailbound}\big\|\Phi_r^{(1)}\big\| \le C_\beta(\omega) t_r^{-\beta/2}. \eeq
In a slight abuse of notation, extending $ \Phi_r^{(1)} $ to $ \ell^2(\mathbb{N}) $ in the canonical fashion, the symmetric operator
\[S_1 = \sum_{r=0}^\infty \sqrt{t_r} \Phi_r^{(1)}  \]
is thus almost surely bounded. Since $H - S_1$ is a direct sum of finite-dimensional symmetric matrices, $H$ is almost surely essentially self-adjoint on the finitely supported functions in $\ell^2(\nn)$.

\begin{proof}[Proof of Theorem~\ref{thm:measures}] Since the distribution of the spectral measure $\mu_x$ of $\delta_x$ for
$H$ does not depend on $x \in \nn$, it suffices to prove the continuity of $\mu_1$.  The tail bound~\eqref{eq:normtailbound} with $ \beta \in (0,1) $ implies that the infinite-volume Green function
\[G(z) = \langle \delta_1, (H-z)^{-1} \delta_1 \rangle\]
can be approximated by its finite-volume counterparts:
\begin{align*}|G(z) - G_n(z)| &= \left| \langle \delta_1, (H-z)^{-1}(H-H_n)\left(H_n - z\right)^{-1} \delta_1 \rangle \right|\\
& \le \eta^{-1} \left\| \sum_{r=n+1}^\infty \sqrt{t_r} \, \Phi_r^{(1)} \left(H_n - z\right)^{-1} \delta_1\right\|\\
&\le C_\beta(\omega)\,  \eta^{-2} \sum_{r=n+1}^\infty t_r^{(1-\beta)/2}\\
&\le C^\prime_\beta(\omega) \, \eta^{-2}  t_n^{(1-\beta)/2}.
\end{align*}
On the other hand, combining the Borel-Cantelli lemma with Theorem~\ref{thm:locresbound} shows that for any $\gamma > 0$ we almost surely have
\[ \sup_{z \in W + i(\eta_n, 2)} \im G_n(z)  \le  N_n^\gamma\]
for all sufficiently large $n$.
Given $E \in W$, $\eta \in (0, 1)$, and $\theta \in (0,1)$ we choose
\[n \geq \log_2(\eta^{-1}) \cdot \max\left\{\frac{4}{(1-\beta)(1+\epsilon)}, \frac{1}{(1-\alpha)}\right\},\]
so $\eta_n \le \eta$ and
\[\sup_{E \in W} |G(E+i\eta) - G_n(E+i\eta)| \le C^\prime_\beta(\omega).\]
Choosing $\gamma$ sufficiently small this implies that
\begin{align*}
\mu_{\delta_1}(E-\eta, E+\eta) &\le 2\eta \, \im G(E+i\eta)\\
&\le 2\eta \sup_{z \in D_n} \im G_n(z) + C^\prime_\beta(\omega) \eta \\
&\le (2+C^\prime_\beta(\omega)) \eta \, N_n^\gamma\\
&\le(2 + C^\prime_\beta(\omega)) \eta^{1-\theta}.
\end{align*}
\end{proof}

Finally, Theorem~\ref{thm:localstats} is a direct application of the work of Landon, Sosoe, and Yau~\cite{landonsosoeyauarxiv}.

\begin{proof}[Proof of Theorem~\ref{thm:localstats}]
We choose $0 < \alpha < \epsilon$ so that $\eta_n \ll t_n$ and note that the bounds of Theorem~\ref{thm:locallaw} extend immediately to the empirical eigenvalue measure of $H_{n-1} \oplus H_{n-1}^\prime$. This means that the analysis of~\cite{landonsosoeyauarxiv} applies, up to the minor technical point that the result of~\cite{landonsosoeyauarxiv} is stated for Gaussian perturbations of deterministic matrices. In the present setting this means that~\cite{landonsosoeyauarxiv} applies when the $k$-point function~\eqref{eq:rhokdef} is the marginal of the symmetrized eigenvalue density conditioned on $H_{n-1} \oplus H_{n-1}^\prime$ and that the Stieltjes transform of the scaling density $\rho_{n, fc}(E)$ in~\eqref{eq:psidef} solves
\[M(z) = \frac{1}{N_n} \sum_{\lambda \in \sigma(H_{n-1} \oplus H_{n-1}^\prime)}  \frac{1}{\lambda - z - t_n M(z)}\]
rather than~\eqref{eq:Mdef}. As mentioned in the remarks of~\cite{landonyauarxiv}, this problem is easily remedied since the local law in Lemma~\ref{thm:coarselocallaw} and the arguments of Lemma 3.6 of~\cite{MR3502606} show that the scaling factor $\rho_{n, fc}(E)$ is typically close to its counterpart in the statement of Theorem~\ref{thm:localstats}. This observation and simple weak convergence arguments then imply the result as stated.
\end{proof}

\minisec{Acknowledgment}
We would like to thank an anonymous referee for helping us clarify the proof of Theorem~\ref{thm:bndprop}. This work was supported by the DFG (WA 1699/2-1).

\bibliographystyle{abbrv}
\bibliography{References}

\begin{thebibliography}{10}

\bibitem{MR2257129}
M.~Aizenman, R.~Sims, and S.~Warzel.
\newblock Stability of the absolutely continuous spectrum of random
  {S}chr\"odinger operators on tree graphs.
\newblock {\em Probab. Theory Related Fields}, 136(3):363--394, 2006.

\bibitem{MR3055759}
M.~Aizenman and S.~Warzel.
\newblock Resonant delocalization for random {S}chr\"{o}dinger operators on
  tree graphs.
\newblock {\em J. Eur. Math. Soc. (JEMS)}, 15(4):1167--1222, 2013.

\bibitem{MR3364516}
M.~Aizenman and S.~Warzel.
\newblock {\em Random operators: Disorder effects on quantum spectra and
  dynamics}, volume 168 of {\em Graduate Studies in Mathematics}.
\newblock American Mathematical Society, Providence, RI, 2015.

\bibitem{bauerschmidt2}
R.~Bauerschmidt, J.~Huang, A.~Knowles, and H.-T. Yau.
\newblock Bulk eigenvalue statistics for random regular graphs.
\newblock {\em Ann. Probab.}, 45(6A):3626--3663, 2017.

\bibitem{bauerschmidt1}
R.~Bauerschmidt, A.~Knowles, and H.-T. Yau.
\newblock Local semicircle law for random regular graphs.
\newblock {\em Comm. Pure Appl. Math.}, 70(10):1898--1960, 2017.

\bibitem{benignibourgade}
L.~Benigni.
\newblock Eigenvectors distribution and quantum unique ergodicity for deformed
  {W}igner matrices.
\newblock Preprint available at ar{X}iv:1711.07103, 2017.

\bibitem{MR1552611}
P.~M. Bleher and Y.~G. Sinai.
\newblock Critical indices for {D}yson's asymptotically-hierarchical models.
\newblock {\em Comm. Math. Phys.}, 45(3):247--278, 1975.

\bibitem{PhysRevE.98.042116}
E.~Bogomolny and M.~Sieber.
\newblock Power-law random banded matrices and ultrametric matrices:
  Eigenvector distribution in the intermediate regime.
\newblock {\em Phys. Rev. E}, 98:042116, Oct 2018.

\bibitem{MR3695802}
P.~Bourgade, L.~Erd\H{o}s, H.-T. Yau, and J.~Yin.
\newblock Universality for a class of random band matrices.
\newblock {\em Adv. Theor. Math. Phys.}, 21(3):739--800, 2017.

\bibitem{bandmatrix2}
P.~Bourgade, F.~Yang, H.-T. Yau, and J.~Yin.
\newblock Random band matrices in the delocalized phase, {II}: {G}eneralized
  resolvent estimates.
\newblock Preprint available at ar{X}iv:1807.01562, 2018.

\bibitem{bandmatrix1}
P.~Bourgade, H.-T. Yau, and J.~Yin.
\newblock Random band matrices in the delocalized phase, {I}: {Q}uantum unique
  ergodicity and universality.
\newblock Preprint available at ar{X}iv:1807.01559, 2018.

\bibitem{MR1063180}
A.~Bovier.
\newblock The density of states in the {A}nderson model at weak disorder: a
  renormalization group analysis of the hierarchical model.
\newblock {\em J. Statist. Phys.}, 59(3-4):745--779, 1990.

\bibitem{MR1143413}
D.~Brydges, S.~N. Evans, and J.~Z. Imbrie.
\newblock Self-avoiding walk on a hierarchical lattice in four dimensions.
\newblock {\em Ann. Probab.}, 20(1):82--124, 1992.

\bibitem{PhysRevLett.64.1851}
G.~Casati, L.~Molinari, and F.~Izrailev.
\newblock Scaling properties of band random matrices.
\newblock {\em Phys. Rev. Lett.}, 64:1851--1854, Apr 1990.

\bibitem{MR3665217}
M.~Disertori and M.~Lager.
\newblock Density of states for random band matrices in two dimensions.
\newblock {\em Ann. Henri Poincar\'e}, 18(7):2367--2413, 2017.

\bibitem{MR0148397}
F.~J. Dyson.
\newblock A {B}rownian-motion model for the eigenvalues of a random matrix.
\newblock {\em J. Mathematical Phys.}, 3:1191--1198, 1962.

\bibitem{MR0436850}
F.~J. Dyson.
\newblock Existence of a phase-transition in a one-dimensional {I}sing
  ferromagnet.
\newblock {\em Comm. Math. Phys.}, 12(2):91--107, 1969.

\bibitem{MR3085669}
L.~Erd\H{o}s, A.~Knowles, H.-T. Yau, and J.~Yin.
\newblock Delocalization and diffusion profile for random band matrices.
\newblock {\em Comm. Math. Phys.}, 323(1):367--416, 2013.

\bibitem{MR3068390}
L.~Erd\H{o}s, A.~Knowles, H.-T. Yau, and J.~Yin.
\newblock The local semicircle law for a general class of random matrices.
\newblock {\em Electron. J. Probab.}, 18:no. 59, 58, 2013.

\bibitem{MR2810797}
L.~Erd\H{o}s, B.~Schlein, and H.-T. Yau.
\newblock Universality of random matrices and local relaxation flow.
\newblock {\em Invent. Math.}, 185(1):75--119, 2011.

\bibitem{MR2274470}
R.~Froese, D.~Hasler, and W.~Spitzer.
\newblock Absolutely continuous spectrum for the {A}nderson model on a tree: a
  geometric proof of {K}lein's theorem.
\newblock {\em Comm. Math. Phys.}, 269(1):239--257, 2007.

\bibitem{PhysRevLett.67.2405}
Y.~V. Fyodorov and A.~D. Mirlin.
\newblock Scaling properties of localization in random band matrices: A
  \ensuremath{\sigma}-model approach.
\newblock {\em Phys. Rev. Lett.}, 67:2405--2409, Oct 1991.

\bibitem{1742-5468-2009-12-L12001}
Y.~V. Fyodorov, A.~Ossipov, and A.~Rodriguez.
\newblock The {A}nderson localization transition and eigenfunction
  multifractality in an ensemble of ultrametric random matrices.
\newblock {\em Journal of Statistical Mechanics: Theory and Experiment},
  2009(12):L12001, 2009.

\bibitem{MR649813}
K.~Gaw\c{e}dzki and A.~Kupiainen.
\newblock Renormalization group study of a critical lattice model. {II}. {T}he
  correlation functions.
\newblock {\em Comm. Math. Phys.}, 83(4):469--492, 1982.

\bibitem{MR1781846}
A.~Guionnet and O.~Zeitouni.
\newblock Concentration of the spectral measure for large matrices.
\newblock {\em Electron. Comm. Probab.}, 5:119--136, 2000.

\bibitem{hemarcozzi}
Y.~He and M.~Marcozzi.
\newblock Diffusion profile for random band matrices: a short proof.
\newblock Preprint available at ar{X}iv:1804.09446, 2018.

\bibitem{1064-5616-206-1-93}
W.~Kirsch and L.~A. Pastur.
\newblock On the analogues of {S}zeg{\H{o}'}s theorem for ergodic operators.
\newblock {\em Sbornik: Mathematics}, 206(1):93, 2015.

\bibitem{MR1302384}
A.~Klein.
\newblock Absolutely continuous spectrum in the {A}nderson model on the {B}ethe
  lattice.
\newblock {\em Math. Res. Lett.}, 1(4):399--407, 1994.

\bibitem{MR2864550}
A.~Klein and C.~Sadel.
\newblock Absolutely continuous spectrum for random {S}chr\"{o}dinger operators
  on the {B}ethe strip.
\newblock {\em Math. Nachr.}, 285(1):5--26, 2012.

\bibitem{MR2352276}
E.~Kritchevski.
\newblock Hierarchical {A}nderson model.
\newblock In {\em Probability and mathematical physics}, volume~42 of {\em CRM
  Proc. Lecture Notes}, pages 309--322. Amer. Math. Soc., Providence, RI, 2007.

\bibitem{landonsosoeyauarxiv}
B.~Landon, P.~Sosoe, and H.-T. Yau.
\newblock Fixed energy universality for {D}yson {B}rownian motion.
\newblock Preprint available at arXiv:1609.09011, 2016.

\bibitem{landonyauarxiv}
B.~Landon and H.-T. Yau.
\newblock Convergence of {L}ocal {S}tatistics of {D}yson {B}rownian {M}otion.
\newblock {\em Comm. Math. Phys.}, 355(3):949--1000, 2017.

\bibitem{MR1423040}
Y.~Last.
\newblock Quantum dynamics and decompositions of singular continuous spectra.
\newblock {\em J. Funct. Anal.}, 142(2):406--445, 1996.

\bibitem{MR3502606}
J.~O. Lee, K.~Schnelli, B.~Stetler, and H.-T. Yau.
\newblock Bulk universality for deformed {W}igner matrices.
\newblock {\em Ann. Probab.}, 44(3):2349--2425, 2016.

\bibitem{PhysRevE.54.3221}
A.~D. Mirlin, Y.~V. Fyodorov, F.-M. Dittes, J.~Quezada, and T.~H. Seligman.
\newblock Transition from localized to extended eigenstates in the ensemble of
  power-law random banded matrices.
\newblock {\em Phys. Rev. E}, 54:3221--3230, Oct 1996.

\bibitem{MR1463464}
S.~Molchanov.
\newblock Hierarchical random matrices and operators. {A}pplication to
  {A}nderson model.
\newblock In {\em Multidimensional statistical analysis and theory of random
  matrices ({B}owling {G}reen, {OH}, 1996)}, pages 179--194. VSP, Utrecht,
  1996.

\bibitem{doi:10.1093/imrn/rnx145}
R.~Peled, J.~Schenker, M.~Shamis, and S.~Sodin.
\newblock On the {W}egner orbital model.
\newblock {\em International Mathematics Research Notices}, page rnx145, 2017.

\bibitem{MR1725357}
D.~Revuz and M.~Yor.
\newblock {\em Continuous martingales and {B}rownian motion}, volume 293 of
  {\em Grundlehren der Mathematischen Wissenschaften [Fundamental Principles of
  Mathematical Sciences]}.
\newblock Springer-Verlag, Berlin, third edition, 1999.

\bibitem{MR3510466}
C.~Sadel.
\newblock Anderson transition at two-dimensional growth rate on antitrees and
  spectral theory for operators with one propagating channel.
\newblock {\em Ann. Henri Poincar\'{e}}, 17(7):1631--1675, 2016.

\bibitem{MR2525652}
J.~Schenker.
\newblock Eigenvector localization for random band matrices with power law band
  width.
\newblock {\em Comm. Math. Phys.}, 290(3):1065--1097, 2009.

\bibitem{MR3824956}
M.~Shcherbina and T.~Shcherbina.
\newblock Universality for 1d {R}andom {B}and {M}atrices: {S}igma-{M}odel
  {A}pproximation.
\newblock {\em J. Stat. Phys.}, 172(2):627--664, 2018.

\bibitem{MR2726110}
S.~Sodin.
\newblock The spectral edge of some random band matrices.
\newblock {\em Ann. of Math. (2)}, 172(3):2223--2251, 2010.

\bibitem{MR2906465}
T.~Tao.
\newblock {\em Topics in random matrix theory}, volume 132 of {\em Graduate
  Studies in Mathematics}.
\newblock American Mathematical Society, Providence, RI, 2012.

\bibitem{MR3649447}
P.~von Soosten and S.~Warzel.
\newblock Renormalization {G}roup {A}nalysis of the {H}ierarchical {A}nderson
  {M}odel.
\newblock {\em Ann. Henri Poincar\'e}, 18(6):1919--1947, 2017.

\bibitem{resflow}
P.~von Soosten and S.~Warzel.
\newblock The phase transition in the ultrametric ensemble and local stability
  of {D}yson {B}rownian motion.
\newblock {\em Electron. J. Probab.}, 23:1--24, 2018.

\bibitem{proceedings}
P.~von Soosten and S.~Warzel.
\newblock Singular spectrum and recent results on hierarchical operators.
\newblock In {\em Mathematical {P}roblems in {Q}uantum {P}hysics}, volume 717
  of {\em Contemp. Math.}, pages 215--225. Amer. Math. Soc., Providence, RI,
  2018.

\bibitem{nonergodic}
P.~von Soosten and S.~Warzel.
\newblock Non-ergodic delocalization in the {R}osenzweig-{P}orter model.
\newblock {\em Lett. Math. Phys.}, 109(4):905--922, 2019.

\bibitem{MR639135}
F.~Wegner.
\newblock Bounds on the density of states in disordered systems.
\newblock {\em Z. Phys. B}, 44(1-2):9--15, 1981.

\bibitem{bandmatrix3}
F.~Yang and J.~Yin.
\newblock Random band matrices in the delocalized phase, {III}: {A}veraging
  fluctuations.
\newblock Preprint available at ar{X}iv:1807.02447, 2018.

\end{thebibliography}

\newpage
\noindent Per von Soosten\\
Munich Center for Quantum Science and Technology, and\\
Zentrum Mathematik, TU M\"{u}nchen\\
Boltzmannstra{\ss}e 3, 85747 Garching, Germany\\
\verb+vonsoost@ma.tum.de+ 
\bigskip

\noindent Simone Warzel\\
Munich Center for Quantum Science and Technology, and\\
Zentrum Mathematik, TU M\"{u}nchen\\
Boltzmannstra{\ss}e 3, 85747 Garching, Germany\\
\verb+warzel@ma.tum.de+
\end{document}